\newif\ifarxiv
\newif\ifnamechange
\newif\ifauthornames
\newif\ifimages
\newcommand{\V}{\mathcal V}
\newcommand{\E}{\mathcal E}
\renewcommand{\S}{\mathcal S}
\newcommand{\Vdo}{\mathcal V_{\text{or, def}}}
\newcommand{\Vdd}{\mathcal V_{\text{dest, def}}}
\newtheorem{theorem}{Theorem}[section]
\newtheorem{lemma}[theorem]{Lemma}
\newcommand{\real}{{\mathbb{R}}}
\newcommand{\reals}{\real}
\newcommand{\M}{\mathcal{M}}
\newcommand{\R}{\mathbb{R}}
\newcommand{\N}{\mathbb{N}}
\title{\LARGE \bf
Routing Autonomous Vehicles in Congested Transportation Networks: Structural Properties and Coordination Algorithms
}
\title{\LARGE \bf
Autonomous Vehicle Routing in Congested Road Networks
}
\author{Authors Omitted for Blind Review
}
\author{Federico Rossi,\textsuperscript{*} Rick Zhang,\textsuperscript{*} and Marco Pavone%
\thanks{\textnormal{\textsuperscript{*}These authors contributed equally to this work.} Working paper, in preparation for journal submission. }
}
\author{Rick Zhang,\textsuperscript{*} Federico Rossi,\textsuperscript{*} and Marco Pavone%
\thanks{\textnormal{\textsuperscript{*}These authors contributed equally to this work.}} \thanks{This research was supported by National Science Foundation under CAREER Award CMMI-1454737 and by the Dr. Cleve B. Moler Stanford Graduate Fellowship. 
Rick Zhang, Federico Rossi, and Marco Pavone are with the Department of Aeronautics \& Astronautics, Stanford University, Stanford, CA 94305 {\tt\small \{rickz, frossi2, pavone\}@stanford.edu}}%
}
\begin{document}

\maketitle 
\thispagestyle{empty}
\pagestyle{empty}

\begin{abstract}
This paper considers the problem of routing and rebalancing a shared fleet of autonomous (i.e., self-driving) vehicles providing on-demand mobility within a {\em capacitated} transportation network, where congestion might disrupt throughput. We model the problem within a network flow framework and show that under relatively mild assumptions the rebalancing vehicles, if properly coordinated, do not lead to an increase in congestion (in stark contrast to common belief). From an algorithmic standpoint, such theoretical insight suggests that the problem of routing customers and rebalancing vehicles can be {\em decoupled}, which leads to a computationally-efficient routing and rebalancing algorithm for the autonomous vehicles. Numerical experiments and case studies corroborate our theoretical insights and show that the proposed algorithm  
outperforms state-of-the-art point-to-point methods by avoiding excess congestion on the road. Collectively, this paper provides a rigorous approach to the problem of congestion-aware, system-wide coordination of autonomously driving vehicles, and to the characterization of the sustainability of such robotic systems.

\end{abstract}

\section{Introduction}

Autonomous (i.e., robotic, self-driving) vehicles are rapidly becoming a reality and hold great promise for increasing safety and enhancing mobility for those unable or unwilling to drive \cite{WJM-CEBB-LDB:10, Google:14}. A particularly attractive operational paradigm involves coordinating a fleet of autonomous vehicles to provide on-demand service to customers, also called autonomous mobility-on-demand (AMoD). An AMoD system may reduce the cost of travel \cite{KS-KT-RZ-EF-DM-MP:14} as well as provide additional sustainability benefits such as increased overall vehicle utilization, reduced demand for urban parking infrastructure, and reduced pollution (with electric vehicles) \cite{WJM-CEBB-LDB:10}. The key benefits of AMoD are realized through vehicle sharing, where each vehicle, after servicing a customer, drives itself to the location of the next customer or \emph{rebalances} itself throughout the city in anticipation of future customer demand \cite{MP-SLS-EF-DR:12}. 

In terms of traffic congestion, however, there has been no consensus on whether autonomous vehicles in general, and AMoD systems in particular, will ultimately be beneficial or detrimental. It has been argued that by having faster reaction times, autonomous vehicles may be able to drive faster and follow other vehicles at closer distances without compromising safety, thereby effectively increasing the capacity of a road and reducing congestion. They may also be able to interact with traffic lights to reduce full stops at intersections \cite{JP-FS-VM-AJ-JCD-TDP:10}. On the downside, the process of vehicle rebalancing (empty vehicle trips) increases the total number of vehicles on the road (assuming the number of vehicles with customers stays the same). Indeed, it has been argued that the presence of many rebalancing vehicles may contribute to an \emph{increase}  in congestion \cite{BT:15,MB:16}. These statements, however, do not take into account that in an AMoD system the operator has control over the actions (destination and routes) of the vehicles, and may route vehicles intelligently to avoid increasing congestion or perhaps even decrease it. 

Accordingly, the goal of this paper is twofold. First, on an engineering level, we aim to devise routing and rebalancing algorithms for an autonomous vehicle fleet that seek to minimize congestion. Second, on a socio-economic level, we aim to rigorously address the concern that autonomous cars may lead to increased congestion and thus disrupt currently congested transportation infrastructures.

\emph{Literature review:} In this paper, we investigate the problem of controlling an AMoD system within a road network in the presence of congestion effects. Previous work on AMoD systems have primarily concentrated on the rebalancing problem \cite{MP-SLS-EF-DR:12, KS-KT-RZ-EF-DM-MP:14}, whereby one strives to allocate empty vehicles throughout a city while minimizing fuel costs or customer wait times. The rebalancing problem has been studied in \cite{MP-SLS-EF-DR:12} using a fluidic model and in \cite{RZ-MP:15} using a queueing network model. An alternative formulation is the one-to-one pickup and delivery problem \cite{GB-JFC-GL:10}, where a fleet of vehicles service pickup and delivery requests within a given region. Combinatorial asymptotically optimal algorithms for pickup and delivery problems were presented in \cite{KT-MP-EF:11, KT-MP-EF:13}, and generalized to road networks in \cite{KT-MP-EF:12a}. Almost all current approaches assume point-to-point travel between origins and destinations (no road network), and even routing problems on road networks (e.g. \cite{KT-MP-EF:12a}) do not take into account vehicle-to-vehicle interactions that would cause congestion and reduce system throughput. 

On the other hand, traffic congestion has been studied in economics and transportation for nearly a century. The first congestion models \cite{JGW:52, MJL-GBW:55, CFD:94} sought to formalize the relationship between vehicle speed, density, and flow. Since then, approaches to modeling congestion have included empirical \cite{BSK:09}, simulation-based \cite{MT-AH-DH:00,QY-HNK:96,MB-MR-KM-DC-NL-KN-KA:09}, queueing-theoretical \cite{CO-MB:09}, and optimization \cite{SP-HSM:95,BNJ:91}. While there have been many high fidelity congestion models that can accurately predict traffic patterns, the primary goal of congestion modeling has been the \emph{analysis} of traffic behavior. Efforts to \emph{control} traffic have been limited to the control of intersections \cite{TL-PK-NW-HLV-LLHA-SSPH:15,NX-EF-YL-YL-YW-DW:15} and freeway on-ramps \cite{MP-HHS-JMB:91} because human drivers behave non-cooperatively.
The problem of cooperative, system-wide routing (a key benefit of AMoD systems) is similar to the dynamic traffic assignment problem (DTA) \cite{BNJ:91} and to \cite{DW-JPVDB-MCL-DM:11,DW-CB-MCL:14} in the case of online routing. 
The key difference is that these approaches only optimize routes for passenger vehicles while we seek to optimize the routes of {\em both} passenger vehicles \emph{and} empty rebalancing vehicles. 

\emph{Statement of contributions:} The contribution of this paper is threefold. First, we model an AMoD system within a network flow framework, whereby customer-carrying and empty rebalancing vehicles are represented as flows over a  {\em capacitated} road network (in such model, when the flow of vehicles along a road reaches a critical capacity value, congestion effects occur). Within this model, we provide a cut condition for the road graph that needs to be satisfied for congestion-free customer and rebalancing flows to exist. Most importantly, under the assumption of a {\em symmetric} road network, we investigate an existential result that leads to two key conclusions: (1) rebalancing does not increase congestion, and (2) for certain cost functions, the problems of finding customer and rebalancing flows can be decoupled. Second, leveraging the theoretical insights, we propose a computationally-efficient algorithm for congestion-aware routing and rebalancing of an AMoD system that is broadly applicable to time-varying, possibly asymmetric road networks. Third, through numerical studies on real-world traffic data, we validate our assumptions and show that the proposed real-time routing and rebalancing algorithm outperforms state-of-the-art point-to-point rebalancing algorithms in terms of lower customer wait times by avoiding excess congestion on the road.

\emph{Organization:} The remainder of this paper is organized as follows: in Section \ref{sec:model} we present a network flow model of an AMoD system on a capacitated road network and formulate the routing and rebalancing problem. In Section \ref{sec:analysis} we present key structural properties of the model including fundamental limitations of performance and conditions for the existence of feasible (in particular, congestion-free) solutions. 
The insights from Section \ref{sec:analysis} are used to develop a practical real-time routing and rebalancing algorithm in Section \ref{sec:realtime}. Numerical studies and simulation results are presented in Section \ref{sec:num}, and in Section \ref{sec:conc} we draw conclusions and discuss directions for future work. 

\section{Model Description and Problem Formulation} \label{sec:model}

In this section we formulate a network flow model for an AMoD system operating over a capacitated road network. The model allows us to derive key structural insights into the vehicle routing and rebalancing problem, and motivates the design of real-time, congestion-aware algorithms for coordinating the robotic vehicles. We start in Section \ref{subsec:cong_model} with a discussion of our congestion model; then, in Section \ref{sec:fluidmodel} we provide a detailed description of the overall AMoD system model.

\subsection{Congestion Model}\label{subsec:cong_model}
We use a simplified congestion model consistent with classical traffic flow theory \cite{JGW:52}. In classical traffic flow theory, at low vehicle densities on a road link, vehicles travel at the free flow speed of the road (imposed by the speed limit). This is referred to as the free flow phase of traffic. In this phase, the free flow speed is approximately constant \cite{BSK:09a}. The flow, or flow rate, is the number of vehicles passing through the link per unit time, and is given by the product of the speed and density of vehicles. When the flow of vehicles reaches an empirically observed critical value, the flow reaches its maximum. Beyond the critical flow rate, vehicle speeds are dramatically reduced and the flow decreases, signaling the beginning of traffic congestion. The maximum stationary flow rate is called the \emph{capacity} of the road link in the literature. In our approach, road capacities are modeled as {\em constraints on the flow of vehicles}. In this way, the model captures the behavior of vehicles up to the onset of congestion. 

This simplified congestion model is adequate for our purposes because the goal is not to analyze the behavior of vehicles in congested networks, but to control vehicles in order to avoid the onset of congestion. We also do not explicitly model delays at intersections, spillback behavior due to congestion, or bottleneck behavior due to the reduction of the number of lanes on a road link. An extension to our model that accommodates (limited) congestion on links is presented in Section \ref{sec:congrelax}. 

\subsection{Network Flow Model of AMoD system}
\label{sec:fluidmodel}

We consider a road network modeled as a directed graph $G = (\V,\E)$, where $\V$ denotes the node set and $\E\subseteq \V\times \V$ denotes the edge set. Figure \ref{fig:roadnet} shows one such network. The nodes $v$ in $\V$ represent intersections and locations for trip origins/destinations, and the edges $(u,v)$ in $\E$ represent road links. As discussed in Section \ref{subsec:cong_model}, congestion is modeled by imposing capacity constraints on the road links: each constraint represents the capacity of the road upon the onset of congestion. Specifically, for each road link $(u,v) \in \E$, we denote by $c(u,v): \E \mapsto \mathbb{N}_{> 0}$ the capacity of that link. 
When the flow rate on a road link is less than the capacity of the link, all vehicles are assumed to travel at the free flow speed, or the speed limit of the link. For each road link $(u,v) \in \E$, we denote by $t(u,v): \E \mapsto \mathbb{R}_{\geq  0}$ the corresponding free flow time required to traverse road link $(u,v)$. Conversely, when the flow rate on a road link is larger than the capacity of the link, the traversal time  is assumed equal to $\infty$ (we reiterate that our focus in this section is on avoiding the onset of congestion).

We assume that the road network is {\em capacity-symmetric} (or symmetric for short): for any cut\footnote{For any subset of nodes $\S\subseteq \V$, we define a \emph{cut} $(\S, \bar \S)\subseteq \E$ as the set of edges whose origin lies in $\S$ and whose destination lies in $\bar \S=\{\V\setminus \S\}$. Formally, $(\S,\bar \S):=\{(u,v)\in \E: u\in \S, v \in \bar \S\}$.} $(\S, \bar \S)$ of $G(\V, \E)$, the overall capacity of the edges connecting nodes in $\S$ to nodes in $\bar \S$ equals the overall capacity of the edges connecting nodes in $\bar \S$ to nodes in $ \S$, that is
\[
 \sum_{(u,v)\in \E: \,\,u\in \S, \, v\in \bar S} c(u,v) = \sum_{(v,u)\in \E:\,\,u\in \S, \, v\in \bar S} c(v,u) 
\]
It is easy to verify that a network is capacity-symmetric if and only if the overall capacity entering each \emph{node} equals the capacity exiting each node., i.e.
\[
\sum_{u\in \V : (u,v)\in \E} c(u,v) = \sum_{w\in \V: (v,w)\in \E} c(v,w)
\]
If all \emph{edges} have symmetrical capacity, i.e., for all $ (u,v)\in \E$, $c(u,v)=c(v,u)$, then the network is capacity-symmetric. The converse statement, however,  is not true in general.

Transportation requests are described by the tuple $(s, t, \lambda)$, where $s \in \V$ is the origin of the requests, $t\in \V$ is the destination, and $\lambda \in \R_{>0}$ is the rate of requests, in customers per unit time. Transportation requests are assumed to be stationary and deterministic, i.e., the rate of requests does not change with time and is a deterministic quantity. The set of transportation requests is denoted by $\mathcal M = \{(s_m, t_m, \lambda_m) \}_m$, and its cardinality is denoted by $M$. 

Single-occupancy vehicles travel within the network while servicing the transportation requests. We denote $f_m(u,v): \E \mapsto \R_{\geq 0}$, $m=\{1,\ldots, M\}$, as the \emph{customer flow} for requests $m$ on edge $(u,v)$, i.e., the amount of flow from origin $s_m$ to destination $t_m$ that uses link $(u,v)$. We also denote $f_{R}(u,v): \E\mapsto \R_{\geq 0}$ as the \emph{rebalancing flow} on edge $(u,v)$, i.e., the amount of rebalancing flow traversing edge $(u,v)$ needed to realign the vehicles with the asymmetric distribution of transportation requests. 

\subsection{The Routing Problem}
\label{sec:routingproblem}
The goal  is to compute flows for the autonomous vehicles that (i) transfer customers to their desired destinations in minimum time (customer-carrying trips) and (ii) rebalance vehicles throughout the network to realign the vehicle fleet with transportation demand (customer-empty trips). Specifically, the  {\em Congestion-free Routing and Rebalancing Problem (CRRP)} is formally defined as follows. Given a capacitated, symmetric network $G(\V,\E)$, a set of transportation requests $\mathcal M = \{(s_m, t_m, \lambda_m)\}_m$, and a weight factor $\rho>0$, solve 

{\small
\begin{align}
&\!\!\!\!\underset{f_m(\cdot, \cdot), f_R(\cdot,\cdot)}{\text{minimize}} && \!\!\!\!\!\!\!\sum_{m\in\mathcal{M}}  \sum_{(u,v)\in \E} t(u,v) f_m(u,v) \!+\! \rho \sum_{(u,v) \in \E} t(u,v) f_R(u,v) \label{eqn:pdcost} \\
&\text{subject to} && \!\!\!\!\!\!\!\!\sum_{u\in \V} f_m(u,s_m) + \lambda_m = \sum_{w\in \V} f_m(s_m,w) \;\; \forall m\in \mathcal{M} 
\label{eqn:pdmbal}\\
& && \!\!\!\!\!\!\!\!\sum_{u\in \V} f_m(u,t_m) = \lambda_m + \sum_{w\in \V} f_m(t_m,w) \;\; \forall m\in \mathcal{M} \label{eqn:pdmsource}\\
& && \!\!\!\!\!\!\!\!\sum_{u\in \V} f_m(u,v) =  \sum_{w\in \V} f_m(v,w)  \nonumber \\ 
& && \!\!\!\!\!\!\!\!\qquad \qquad \qquad \qquad \forall m\in \mathcal{M}, v\in \V\setminus\{ s_m, t_m\} \label{eqn:pdmsink} \\
& && \!\!\!\!\!\!\!\!\sum_{u\in \V} f_R(u,v) +\sum_{m\in \mathcal M} 1_{v = t_m}\lambda_m \nonumber \\
& && \!\!\!\!\!\!\!\!=  \sum_{w\in \V} f_R(v,w) + \sum_{m\in \mathcal M} 1_{v = s_m}\lambda_m \quad \forall v \in \V \label{eqn:pdrbal}\\
& && \!\!\!\!\!\!\!\!f_R(u,v) + \sum_{m\in \mathcal M} f_m(u,v) \leq c(u,v)  \;\; \forall (u,v)\in \E \label{eqn:pdcong}
\end{align}
}

\begin{figure}[t]
\centering
\includegraphics[width=.22\textwidth]{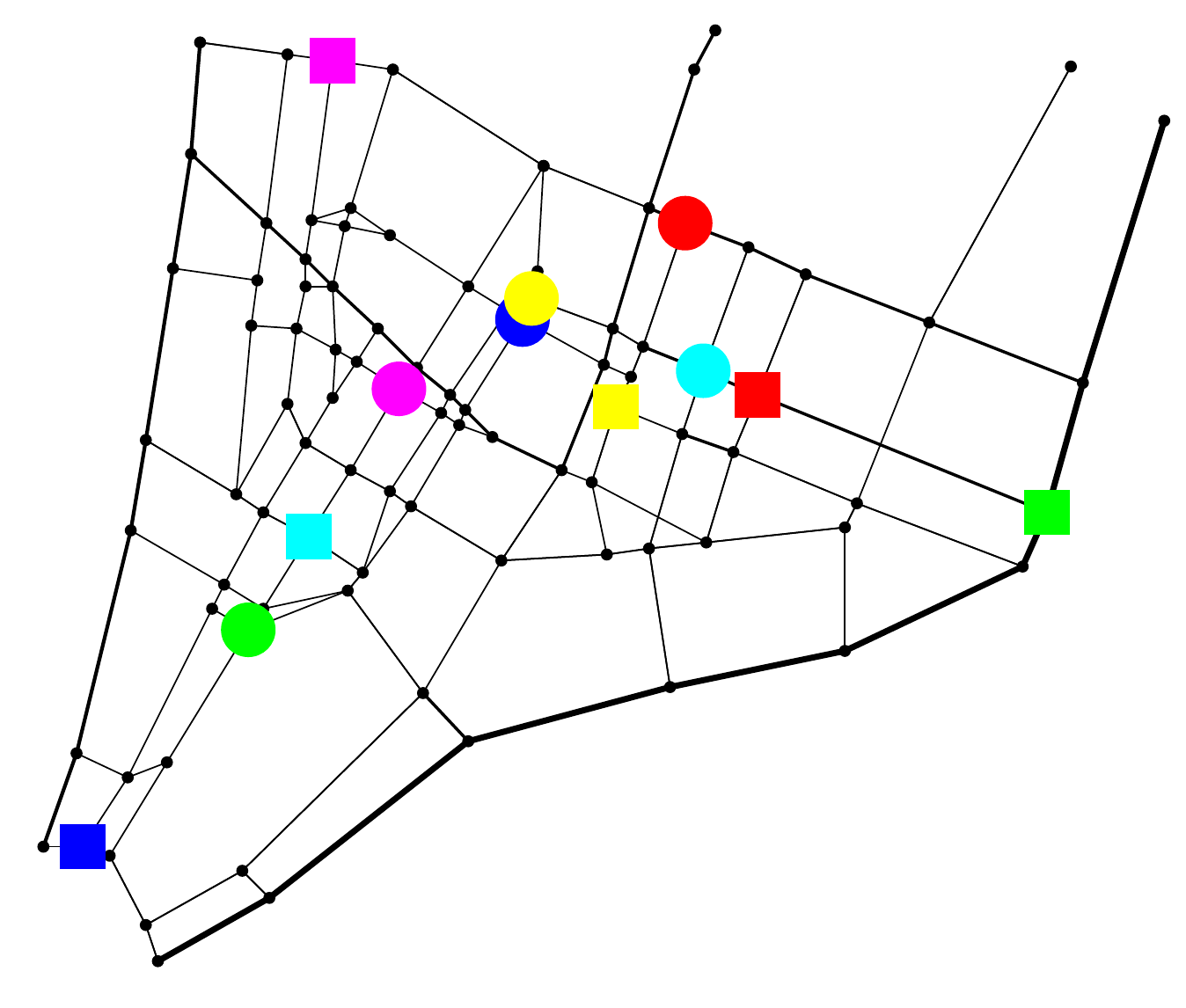}
\vspace{-1em}
\caption{A road network modeling Lower Manhattan and the Financial District. Nodes (denoted by small black dots) model intersections; select nodes, denoted by colored circular and square markers, model passenger trips' origins and destinations. Different trip requests are denoted by different colors. Roads are modeled as edges; line thickness is proportional to road capacity. }
\label{fig:roadnet}
\end{figure}

The cost function \eqref{eqn:pdcost} is a weighted sum (with weight $\rho$) of the overall duration of all passenger trips and the duration of rebalancing trips. Constraints \eqref{eqn:pdmbal}, \eqref{eqn:pdmsource} and \eqref{eqn:pdmsink} enforce continuity of each trip (i.e., flow conservation) across nodes. Constraint \eqref{eqn:pdrbal} ensures that vehicles are rebalanced throughout the road network to re-align vehicle distribution with transportation requests, i.e. to ensure that every outbound customer flow is matched by an inbound flow of rebalancing vehicles and vice versa.  Finally, constraint \eqref{eqn:pdcong} enforces the capacity constraint on each link (function $1_{x}$ denotes the indicator function of the Boolean variable $x = \{\text{true, false}\}$, that is $1_{x}$ equals one if $x$ is true, and equals zero if $x$ is false).
Note that the CRRP is a linear program and, in particular, a special instance of the fractional multi-commodity flow problem \cite{RKA-TLM-JBO:93}.  

We denote a customer flow $\{f_m(u,v)\}_{(u,v),m}$ that satisfies Equations \eqref{eqn:pdmbal}, \eqref{eqn:pdmsource}, \eqref{eqn:pdmsink} and \eqref{eqn:pdcong} as a \emph{feasible customer flow}. For a given set of feasible customer flows $\{f_m(u,v)\}_{(u,v),m}$, we denote a flow $\{f_R(u,v)\}_{(u,v)}$ that satisfies Equation \eqref{eqn:pdrbal} and such that the combined flows $\{f_m(u,v), f_R(u,v)\}_{(u,v),m}$ satisfy Equation \eqref{eqn:pdcong} as a \emph{feasible rebalancing flow}. We remark that a rebalancing flow that is feasible with respect to a set of customer flows may be infeasible for a different collection of customer flows.

For a given set of optimal flows $\{f_m^*(u,v)\}_{(u,v),m}$ and $\{f_R^*(u,v)\}_{(u,v)}$, the minimum number of vehicles needed to implement them is given by

\vspace{-.5em}
{\small
\begin{equation*}
V_{\text{min}} =\left \lceil \sum_{m \in \mathcal{M}} \sum_{(u,v) \in \E} t(u,v)\Big(f_m^*(u,v) + f_R^*(u,v)\Big) \right \rceil.
\end{equation*}
}
This follows from a similar analysis done in \cite{MP-SLS-EF-DR:12} for point-to-point networks. Hence, the cost function \eqref{eqn:pdcost} is aligned with the desire of minimizing the number of vehicles needed to operate an AMoD system.

\subsection{Discussion}

A few comments are in order. First, we assume that transportation requests are time invariant. This assumption is valid when transportation requests change slowly with respect to the average duration of a customer's trip, which is often the case in dense urban environments \cite{HN:71}.  Additionally, in Section \ref{sec:realtime} we will present algorithmic tools that allow one to extend the insights gained from the time-invariant case to the time-varying counterpart. Second, the assumption of single-occupancy for the vehicles models most of the existing (human) one-way vehicle sharing systems (where the driver is considered ``part" of the vehicle), and chiefly disallows the provision of  ride-sharing or carpooling service (this is an aspect left for future research).
Third, as also discussed in Section \ref{subsec:cong_model}, our congestion model is simpler and less accurate than typical congestion models used in the transportation community. However, our model lends itself to efficient real-time optimization and thus it is well-suited to the \emph{control} of fleets of autonomous vehicles. Existing high-fidelity congestion models should be regarded as complementary and could be used offline to identify the congestion thresholds used in our model.
Fourth, while we have defined the CRRP in terms of fractional flows, an integer-valued counterpart can be defined and (approximately) solved to find optimal routes for each {\em individual} customer and vehicle. Algorithmic aspects will be investigated in depth in Section \ref{sec:realtime}, with the goal of devising practical, real-time routing and rebalancing algorithms.  Fifth, trip requests are assumed to be known. In practice, trip requests can be reserved in advance, estimated from historical data, or estimated in real time.  Finally, the assumption of capacity-symmetric road networks indeed appears reasonable for a number of major U.S. metropolitan areas (note that this assumption is much less restrictive than assuming every \emph{individual} road is capacity-symmetric). In the 
\ifarxiv
Supplementary Material,
\else
extended version of this paper \cite{RZ-FR-MP:16aEV},
\fi
by using OpenStreetMap data \cite{MH-PW:08}, we provide a rigorous characterization in terms of capacity symmetry of the road networks of New York City, Chicago, Los Angeles and other major U.S. cities. The results consistently show that urban road networks are usually symmetric to a {\em very high} degree.
 Additionally, several of  our theoretical and algorithmic results extend to the case where this assumption is lifted, as it will be highlighted throughout the paper.

\section{Structural Properties of the \\Network Flow Model} \label{sec:analysis}
In this section we provide two key structural results for the network flow model presented  in Section \ref{sec:fluidmodel}. First, we provide a cut condition that needs to be satisfied for feasible customer and rebalancing flows to exist. In other words, this condition provides a fundamental limitation of performance for congestion-free AMoD service in a given road network. Second, we investigate an existential result (our main theoretical result) that is germane to two key conclusions: (1) rebalancing does not increase congestion in symmetric road networks, and (2) for certain cost functions, the problems of finding  customer and rebalancing flows can be {\em decoupled} -- an insight that will be heavily exploited in subsequent sections.

\subsection{Fundamental Limitations}
 
We start with a few definitions. For a given set of feasible customer flows $\{f_m(u,v)\}_{(u,v),m}$, we denote by $F_{\text{out}}(\S, \bar{\S})$ the overall flow exiting a cut $(\S, \bar{\S})$, i.e., $F_{\text{out}}(\S, \bar{\S}) := \sum_{m\in \mathcal{M}}\sum_{u \in \S, v \in \bar{\S}} f_m(u,v)$. Similarly, we denote by $C_{\text{out}}(\S, \bar{\S})$ the capacity of the network exiting $\S$, i.e., $C_{\text{out}}(\S, \bar{\S}) = \sum_{u \in S, v \in \bar{\S}} c(u,v)$. Analogously, $F_{\text{in}}(\S, \bar{\S})$ denotes  the overall flow entering $\S$ from $\bar{\S}$, i.e., $F_{\text{in}}(\S, \bar{\S}):=F_{\text{out}}(\bar{\S}, \S)$, and $C_{\text{in}}(\S, \bar{\S})$ denotes the capacity entering $\S$ from $\bar{\S}$, i.e., $C_{\text{in}}(\S, \bar{\S}):=C_{\text{out}}(\bar{\S}, {\S})$. We highlight that the arguments leading to the main result of this subsection (Theorem \ref{thm:feasreb}) do not require the assumption of capacity symmetry; hence, Theorem \ref{thm:feasreb} holds for {\em asymmetric} road networks as well. 
 
The next technical lemma (whose proof is provided in the 
\ifarxiv
 Supplementary Material)
\else
 extended version of this paper \cite{RZ-FR-MP:16aEV})
 \fi
 shows that the net flow leaving set $\S$ equals the difference between the flow originating from the origins $s_m$ in $\S$ and the flow exiting through the destinations $t_m$ in $\S$, that is,
 \begin{lemma}[Net flow across a cut]\label{lemma:netflow}
Consider a set of feasible customer flows $\{f_m(u,v)\}_{(u,v),m}$. Then, for every cut $(\S, \bar \S)$, the net flow leaving set $\S$ satisfies
 \[
 F_{\text{out}}(\S, \bar{\S}) - F_{\text{in}}(\S, \bar{\S})= \sum_{m\in \M} 1_{s_m\in \S}\lambda_m - \sum_{m\in \M} 1_{t_m\in\S}\lambda_m.
 \]
 \end{lemma}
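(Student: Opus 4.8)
The plan is to prove the identity by summing the flow-conservation constraints \eqref{eqn:pdmbal}, \eqref{eqn:pdmsource}, \eqref{eqn:pdmsink} over all nodes in $\S$ and all commodities $m \in \M$, and then recognizing that the internal edges of $\S$ cancel. Concretely, for a fixed commodity $m$, each node $v \in \S$ contributes a balance equation: if $v = s_m$ the net outflow equals $\lambda_m$ (from \eqref{eqn:pdmbal}), if $v = t_m$ the net outflow equals $-\lambda_m$ (from \eqref{eqn:pdmsource}), and otherwise the net outflow is zero (from \eqref{eqn:pdmsink}). Summing these over $v \in \S$ gives
\[
\sum_{v\in\S}\Bigl(\sum_{w\in\V} f_m(v,w) - \sum_{u\in\V} f_m(u,v)\Bigr) = 1_{s_m\in\S}\,\lambda_m - 1_{t_m\in\S}\,\lambda_m.
\]

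The key step is then to evaluate the left-hand side. I would split each inner sum over $\V$ into the part with the other endpoint in $\S$ and the part with the other endpoint in $\bar\S$. The edges $(v,w)$ with both $v,w \in \S$ appear once with a plus sign (as an outflow from $v$) and once with a minus sign (as an inflow to $w$), so they cancel entirely. What remains is exactly $\sum_{v\in\S, w\in\bar\S} f_m(v,w) - \sum_{v\in\S, u\in\bar\S} f_m(u,v)$, which is $F^{(m)}_{\text{out}}(\S,\bar\S) - F^{(m)}_{\text{in}}(\S,\bar\S)$, the per-commodity net flow across the cut. Finally, summing over $m \in \M$ and using the definitions $F_{\text{out}}(\S,\bar\S) = \sum_m F^{(m)}_{\text{out}}(\S,\bar\S)$ and likewise for $F_{\text{in}}$ yields the claimed identity.

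I do not expect any genuine obstacle here; this is essentially the standard telescoping/cancellation argument used to prove the max-flow min-cut style "flow across a cut equals net supply inside the cut" fact, specialized to a multi-commodity setting. The only point requiring a little care is the bookkeeping when handling self-loops (edges $(v,v)$, if the graph allows them), but such edges contribute equally to inflow and outflow and cancel trivially, so they do not affect the result. The argument also makes no use of capacity symmetry or of the capacity constraint \eqref{eqn:pdcong} — only the conservation equations are needed — which is consistent with the remark preceding the lemma that the fundamental-limitation results hold for asymmetric networks as well.
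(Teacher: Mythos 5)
Your proof is correct and follows essentially the same route as the paper's: summing the per-node conservation constraints \eqref{eqn:pdmbal}--\eqref{eqn:pdmsink} over all $v \in \S$ and all $m \in \M$, cancelling the flows on edges internal to $\S$, and identifying the surviving terms with $F_{\text{out}} - F_{\text{in}}$ and the net supply $\sum_m (1_{s_m\in\S} - 1_{t_m\in\S})\lambda_m$. Your added remarks on self-loops and on the argument not using capacity symmetry are consistent with the paper and require no correction.
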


We now state two additional lemmas (whose proofs are given in 
\ifarxiv
the Supplementary Material)
\else
\cite{RZ-FR-MP:16aEV})
\fi
 providing, respectively, lower and upper bounds for the outflows $ F_{\text{out}}(\S, \bar{\S})$.
 
\begin{lemma}[Lower bound for outflow] \label{lemma:grossflow}
Consider a set of feasible customer flows $\{f_m(u,v)\}_{(u,v),m}$. Then, for any cut $(\S, \bar \S)$, the overall flow $F_{\text{out}}(\S, \bar{\S})$ exiting cut $(\S, \bar \S)$ is lower bounded according to
 \[
\sum_{m \in \M} 1_{s_m\in \S, t_m \in \bar \S} \lambda_m \leq F_{\text{out}}(\S, \bar{\S}).
 \]
 \end{lemma}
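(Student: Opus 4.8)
The plan is to prove the lower bound by a direct flow-conservation argument applied to each commodity $m$ separately, then sum. The key observation is that a single commodity $m$ is a standard single-commodity flow with a source of rate $\lambda_m$ at $s_m$ and a sink of rate $\lambda_m$ at $t_m$ (this is exactly what constraints \eqref{eqn:pdmbal}, \eqref{eqn:pdmsource}, \eqref{eqn:pdmsink} encode). For such a flow, and any cut $(\S,\bar\S)$, the net flow of commodity $m$ across the cut equals $\lambda_m$ if $s_m \in \S$ and $t_m \in \bar\S$; equals $-\lambda_m$ if $s_m \in \bar\S$ and $t_m \in \S$; and equals $0$ otherwise (both endpoints on the same side). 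This is just Lemma \ref{lemma:netflow} specialized to a single commodity.

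First I would write $F_{\text{out}}(\S,\bar\S) = \sum_{m\in\M} F^m_{\text{out}}(\S,\bar\S)$ where $F^m_{\text{out}}(\S,\bar\S) := \sum_{u\in\S, v\in\bar\S} f_m(u,v)$, and similarly for $F^m_{\text{in}}$. Since all flows are nonnegative, $F^m_{\text{out}}(\S,\bar\S) \geq F^m_{\text{out}}(\S,\bar\S) - F^m_{\text{in}}(\S,\bar\S)$, i.e., the gross outflow of commodity $m$ dominates its net outflow. By the single-commodity version of Lemma \ref{lemma:netflow}, the net outflow of commodity $m$ equals $1_{s_m\in\S}\lambda_m - 1_{t_m\in\S}\lambda_m$, which is at least $\lambda_m$ precisely when $s_m\in\S$ and $t_m\in\bar\S$ (then it equals $\lambda_m$), and is nonnegative... actually it can be $-\lambda_m$, so I only keep the terms where it is nonnegative. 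More carefully: $1_{s_m\in\S}\lambda_m - 1_{t_m\in\S}\lambda_m \geq 1_{s_m\in\S, t_m\in\bar\S}\lambda_m$ always holds (check the four cases: if $s_m\in\S,t_m\in\bar\S$ it is $\lambda_m \geq \lambda_m$; if $s_m\in\S,t_m\in\S$ it is $0\geq 0$; if $s_m\in\bar\S,t_m\in\bar\S$ it is $0\geq 0$; if $s_m\in\bar\S,t_m\in\S$ it is $-\lambda_m\geq 0$). Hence $F^m_{\text{out}}(\S,\bar\S) \geq 1_{s_m\in\S,t_m\in\bar\S}\lambda_m$, and summing over $m\in\M$ gives the claim.

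Alternatively, and perhaps cleaner for the write-up, I could invoke Lemma \ref{lemma:netflow} directly on the aggregate: $F_{\text{out}}(\S,\bar\S) \geq F_{\text{out}}(\S,\bar\S) - F_{\text{in}}(\S,\bar\S) = \sum_m 1_{s_m\in\S}\lambda_m - \sum_m 1_{t_m\in\S}\lambda_m$, but the right-hand side here is the aggregate net flow, which is not obviously at least $\sum_m 1_{s_m\in\S,t_m\in\bar\S}\lambda_m$ unless one goes back to the per-commodity decomposition — a commodity with $s_m\in\bar\S,t_m\in\S$ contributes negatively to the aggregate and could in principle cancel the positive contributions. So the per-commodity argument is essential; the aggregate Lemma \ref{lemma:netflow} alone is too lossy. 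This is the one subtlety worth being careful about, but it is not really an obstacle — the per-commodity flow conservation is elementary.

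I do not anticipate a hard step here; the only thing to get right is the case analysis establishing $1_{s_m\in\S}\lambda_m - 1_{t_m\in\S}\lambda_m \geq 1_{s_m\in\S,t_m\in\bar\S}\lambda_m$ and the fact that nonnegativity of $f_m$ lets us drop the (possibly positive) inflow term $F^m_{\text{in}}$. Everything else is bookkeeping, and the result follows in a few lines.
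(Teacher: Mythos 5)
Your approach is essentially the paper's, recast per commodity. The paper sums the conservation constraints over the nodes of $\S$ but only for the commodities with $s_m\in\S$ and $t_m\in\bar\S$, obtains $\sum_{m:\,s_m\in\S,\,t_m\in\bar\S}\lambda_m$ as the net flow of \emph{those} commodities across the cut, and then drops the (nonnegative) inflow term and enlarges the outflow sum to all commodities. Restricting to the crossing commodities up front is exactly what spares the paper your case analysis. Your per-commodity decomposition together with the single-commodity identity $F^m_{\text{out}}(\S,\bar\S)-F^m_{\text{in}}(\S,\bar\S) = 1_{s_m\in\S}\lambda_m - 1_{t_m\in\S}\lambda_m$ is a valid alternative route, and you correctly identify the one real subtlety: commodities flowing from $\bar\S$ into $\S$ contribute negatively to the aggregate net flow, so invoking Lemma \ref{lemma:netflow} on the aggregate alone is too lossy.

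That said, your ``more carefully'' case analysis is wrong in its fourth case. The inequality $1_{s_m\in\S}\lambda_m - 1_{t_m\in\S}\lambda_m \geq 1_{s_m\in\S,\,t_m\in\bar\S}\lambda_m$ does \emph{not} always hold: when $s_m\in\bar\S$ and $t_m\in\S$ it reads $-\lambda_m\geq 0$, which is false for $\lambda_m>0$ (you list this case and assert it as if it were valid, contradicting your own earlier observation). Consequently the chain $F^m_{\text{out}}\geq F^m_{\text{out}}-F^m_{\text{in}}\geq 1_{s_m\in\S,\,t_m\in\bar\S}\lambda_m$ breaks at the second link for such commodities. The per-commodity conclusion $F^m_{\text{out}}(\S,\bar\S)\geq 1_{s_m\in\S,\,t_m\in\bar\S}\lambda_m$ is nevertheless true for them, but for the trivial reason that the right-hand side is zero and flows are nonnegative, not via the net-flow bound. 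The repair is one line: for commodities with $s_m\in\S$ and $t_m\in\bar\S$ use the net-flow identity and drop $F^m_{\text{in}}\geq 0$; for every other commodity simply use $F^m_{\text{out}}\geq 0$. With that fix your argument is complete and equivalent to the paper's.
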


\begin{lemma}[Upper bound for outflow] \label{lemma:necessary}
Assume there exists a set of {\em feasible} customer and rebalancing flows $\{f_m(u,v), \, f_R(u,v)\}_{(u,v),m}$. Then, 
for every cut $(\S,\bar{\S})$, 
\begin{enumerate}
\item $F_{\text{out}}(\S, \bar{\S}) \leq C_{\text{out}}(\S, \bar{\S})$, and
\item $F_{\text{out}}(\S, \bar{\S}) \leq C_{\text{in}}(\S, \bar{\S})$.
\end{enumerate}
\end{lemma}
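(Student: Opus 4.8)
The plan is to prove the two inequalities in Lemma~\ref{lemma:necessary} by combining the flow-decomposition identity of Lemma~\ref{lemma:netflow}, the outflow lower bound of Lemma~\ref{lemma:grossflow}, and the edge-capacity constraint~\eqref{eqn:pdcong}, together with the capacity-symmetry hypothesis where needed. I would treat part~(1) first since it is the more direct of the two.

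For part~(1), I would start from the capacity constraint~\eqref{eqn:pdcong}, which states $f_R(u,v) + \sum_{m\in\mathcal M} f_m(u,v) \leq c(u,v)$ on every edge. Summing this inequality over all edges $(u,v)$ crossing the cut, i.e.\ $u\in\mathcal S$, $v\in\bar{\mathcal S}$, gives $F_{\text{out}}(\mathcal S,\bar{\mathcal S}) + F_{R,\text{out}}(\mathcal S,\bar{\mathcal S}) \leq C_{\text{out}}(\mathcal S,\bar{\mathcal S})$, where $F_{R,\text{out}}$ denotes the rebalancing flow across the cut. Since $f_R \geq 0$, the term $F_{R,\text{out}}(\mathcal S,\bar{\mathcal S})$ is nonnegative, and dropping it yields $F_{\text{out}}(\mathcal S,\bar{\mathcal S}) \leq C_{\text{out}}(\mathcal S,\bar{\mathcal S})$, which is exactly part~(1). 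Note this argument does not use symmetry at all.

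For part~(2), the goal is to bound $F_{\text{out}}(\mathcal S,\bar{\mathcal S})$ by the \emph{incoming} capacity $C_{\text{in}}(\mathcal S,\bar{\mathcal S})$. Here I would bring in the rebalancing balance constraint~\eqref{eqn:pdrbal}. Summing~\eqref{eqn:pdrbal} over all $v\in\bar{\mathcal S}$ and cancelling internal rebalancing flow within $\bar{\mathcal S}$, one obtains a cut-level balance for the rebalancing flow: the net rebalancing flow into $\bar{\mathcal S}$ equals the net customer imbalance of $\bar{\mathcal S}$, i.e.\ $F_{R,\text{in}}(\bar{\mathcal S}) - F_{R,\text{out}}(\bar{\mathcal S}) = \sum_{m}1_{s_m\in\bar{\mathcal S}}\lambda_m - \sum_m 1_{t_m\in\bar{\mathcal S}}\lambda_m$, which by Lemma~\ref{lemma:netflow} equals $-(F_{\text{out}}(\bar{\mathcal S},\mathcal S) - F_{\text{in}}(\bar{\mathcal S},\mathcal S)) = F_{\text{out}}(\mathcal S,\bar{\mathcal S}) - F_{\text{in}}(\mathcal S,\bar{\mathcal S})$. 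Combining the customer and rebalancing flows: the total combined flow entering $\bar{\mathcal S}$ across the cut is $F_{\text{in}}(\mathcal S,\bar{\mathcal S}) + F_{R,\text{in}}(\mathcal S,\bar{\mathcal S})$, and summing the edge-capacity constraint over the edges entering $\bar{\mathcal S}$ gives that this combined inflow is at most $C_{\text{in}}(\mathcal S,\bar{\mathcal S})$. Using the cut-level rebalancing balance to substitute for $F_{R,\text{in}} - F_{R,\text{out}}$ and using $F_{R,\text{out}} \geq 0$, the customer outflow $F_{\text{out}}(\mathcal S,\bar{\mathcal S})$ is pinned below the combined inflow, hence below $C_{\text{in}}(\mathcal S,\bar{\mathcal S})$. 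I expect the main obstacle to be bookkeeping the direction conventions carefully --- in particular, keeping straight the definitions $F_{\text{in}}(\mathcal S,\bar{\mathcal S}) = F_{\text{out}}(\bar{\mathcal S},\mathcal S)$ and the analogous one for capacities --- and making sure the rebalancing-flow cut balance is applied to the correct side so that the nonnegative term $F_{R,\text{out}}$ can be dropped in the favorable direction. As a sanity check, combining parts~(1) and~(2) with capacity symmetry ($C_{\text{out}} = C_{\text{in}}$ across every cut) should make the two bounds coincide, which is consistent with the intuition feeding into Theorem~\ref{thm:feasreb}.
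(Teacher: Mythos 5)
Your overall plan is exactly the paper's argument: part~(1) is the capacity constraint~\eqref{eqn:pdcong} summed over the forward edges of the cut with the nonnegative rebalancing term dropped, and part~(2) combines the rebalancing conservation law~\eqref{eqn:pdrbal}, summed over one side of the cut, with Lemma~\ref{lemma:netflow} to show that the total (customer plus rebalancing) flow on the reverse edges of the cut dominates $F_{\text{out}}(\S,\bar\S)$ and is in turn capped by $C_{\text{in}}(\S,\bar\S)$. (Lemma~\ref{lemma:grossflow} and capacity symmetry are not needed anywhere in this lemma, as the paper itself notes.)

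The one concrete defect is the sign in your cut-level rebalancing balance. Lemma~\ref{lemma:netflow} applied to the cut $(\bar\S,\S)$ gives $\sum_m 1_{s_m\in\bar\S}\lambda_m - \sum_m 1_{t_m\in\bar\S}\lambda_m = +\bigl(F_{\text{out}}(\bar\S,\S)-F_{\text{in}}(\bar\S,\S)\bigr)$, not its negative; consequently your displayed identity, translated into the paper's conventions, reads $F^{\text{reb}}_{\text{out}}(\S,\bar\S)-F^{\text{reb}}_{\text{in}}(\S,\bar\S)=F_{\text{out}}(\S,\bar\S)-F_{\text{in}}(\S,\bar\S)$, which is the reverse of the true relation $F^{\text{reb}}_{\text{in}}(\S,\bar\S)-F^{\text{reb}}_{\text{out}}(\S,\bar\S)=F_{\text{out}}(\S,\bar\S)-F_{\text{in}}(\S,\bar\S)$ (net rebalancing flow must \emph{enter} $\S$ to replace the net customer flow that leaves it). The sign is not cosmetic: with the correct version one gets $F_{\text{in}}(\S,\bar\S)+F^{\text{reb}}_{\text{in}}(\S,\bar\S) = F_{\text{out}}(\S,\bar\S)+F^{\text{reb}}_{\text{out}}(\S,\bar\S) \geq F_{\text{out}}(\S,\bar\S)$, and then $F_{\text{in}}(\S,\bar\S)+F^{\text{reb}}_{\text{in}}(\S,\bar\S)\leq C_{\text{in}}(\S,\bar\S)$ closes the proof, whereas with your version the substitution yields $F_{\text{in}}+F^{\text{reb}}_{\text{in}} = 2F_{\text{in}}+F^{\text{reb}}_{\text{out}}-F_{\text{out}}$, which does not dominate $F_{\text{out}}$. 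Relatedly, the quantity you call ``the combined flow entering $\bar\S$'' lives on the edges from $\S$ to $\bar\S$ and is bounded by $C_{\text{out}}(\S,\bar\S)$ under the paper's definitions; the inequality you actually need, $F_{\text{in}}(\S,\bar\S)+F^{\text{reb}}_{\text{in}}(\S,\bar\S)\leq C_{\text{in}}(\S,\bar\S)$, concerns the edges from $\bar\S$ into $\S$. Once these two direction conventions are fixed, the argument goes through and coincides with the paper's proof.
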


We are now in a position to present a {\em structural} (i.e., flow-independent) necessary condition for the existence of feasible customer and rebalancing flows.  

\begin{theorem}[Necessary condition for feasible flows] \label{thm:feasreb}
A necessary condition for the existence of a set of {\em feasible} customer and rebalancing flows $\{f_m(u,v), \, f_R(u,v)\}_{(u,v),m}$, is that, for every cut $(\S, \bar{\S})$,
\begin{enumerate}
\item $\sum_{m \in \M} 1_{s_m\in \S, t_m \in \bar \S} \lambda_m \leq C_{\text{out}}(\S, \bar{\S})$, and
\item $\sum_{m \in \M} 1_{s_m\in \S, t_m \in \bar \S} \lambda_m \leq C_{\text{in}}(\S, \bar{\S})$.
\end{enumerate}
\end{theorem}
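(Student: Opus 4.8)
The plan is to chain together the three lemmas that immediately precede the theorem. The key observation is that the necessary condition to be proved is exactly what one obtains by combining the lower bound on the outflow across a cut (Lemma~\ref{lemma:grossflow}) with the two upper bounds on that same outflow (Lemma~\ref{lemma:necessary}). Concretely, assume a set of feasible customer and rebalancing flows $\{f_m(u,v),\, f_R(u,v)\}_{(u,v),m}$ exists, and fix an arbitrary cut $(\S, \bar{\S})$.

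First, I would invoke Lemma~\ref{lemma:grossflow}, which requires only a feasible \emph{customer} flow (available here since the pair is feasible), to get
\[
\sum_{m \in \M} 1_{s_m\in \S,\, t_m \in \bar \S} \lambda_m \;\leq\; F_{\text{out}}(\S, \bar{\S}).
\]
Next, I would invoke Lemma~\ref{lemma:necessary}, which uses the existence of feasible customer \emph{and} rebalancing flows, to get both $F_{\text{out}}(\S, \bar{\S}) \leq C_{\text{out}}(\S, \bar{\S})$ and $F_{\text{out}}(\S, \bar{\S}) \leq C_{\text{in}}(\S, \bar{\S})$. Transitivity of $\leq$ then yields
\[
\sum_{m \in \M} 1_{s_m\in \S,\, t_m \in \bar \S} \lambda_m \;\leq\; C_{\text{out}}(\S, \bar{\S})
\qquad\text{and}\qquad
\sum_{m \in \M} 1_{s_m\in \S,\, t_m \in \bar \S} \lambda_m \;\leq\; C_{\text{in}}(\S, \bar{\S}),
\]
which are precisely conditions (1) and (2) of the theorem. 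Since the cut $(\S, \bar{\S})$ was arbitrary, the conditions hold for every cut, completing the argument.

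Honestly, there is no serious obstacle here: once Lemmas~\ref{lemma:netflow}, \ref{lemma:grossflow}, and \ref{lemma:necessary} are in hand, the theorem is an immediate corollary obtained by stringing the inequalities together over an arbitrary cut. The only minor points to be careful about are (i) confirming that feasibility of the combined flow indeed supplies the hypotheses of both lemmas (feasible customer flow for Lemma~\ref{lemma:grossflow}, feasible customer-and-rebalancing flow for Lemma~\ref{lemma:necessary}), and (ii) noting explicitly that none of this uses capacity symmetry, so the statement holds for asymmetric networks as well, as claimed in the surrounding text. If the real substance lies anywhere, it is in the proofs of Lemma~\ref{lemma:necessary} (deferred to the supplementary material), where one presumably sums the rebalancing balance equation \eqref{eqn:pdrbal} over $\S$, combines it with Lemma~\ref{lemma:netflow} to show the rebalancing flow across the cut compensates the net customer imbalance, and then applies the capacity constraint \eqref{eqn:pdcong} in both directions across the cut.
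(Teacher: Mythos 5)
Your proposal is correct and matches the paper's proof exactly: the authors also obtain Theorem~\ref{thm:feasreb} as an immediate consequence of chaining Lemma~\ref{lemma:grossflow} with the two bounds of Lemma~\ref{lemma:necessary}. Your side remarks about which feasibility hypotheses each lemma needs and about capacity symmetry being unnecessary are consistent with the paper's own discussion.
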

\begin{proof}
The theorem is a trivial consequence of Lemmas \ref{lemma:grossflow}  and \ref{lemma:necessary}.
\end{proof}

Theorem \ref{thm:feasreb} essentially provides a structural fundamental limitation of performance for a given road network: if the cut conditions in Theorem \ref{thm:feasreb} are not met, then there is no hope of finding congestion-free customer and rebalancing flows. We reiterate that Theorem \ref{thm:feasreb} holds for both symmetric and asymmetric networks (for a symmetric network, claim 2) in Lemma \ref{lemma:necessary} and condition 2) in  Theorem \ref{thm:feasreb} are redundant).

\subsection{Existence of Congestion-Free Flows}
In this section we address the following question: assuming there exists a feasible customer flow, is it always possible to find a feasible rebalancing flow? As we will see, the answer to this question is affirmative and has both conceptual and algorithmic implications. 

\begin{theorem}[Feasible rebalancing]
\label{thm:symmetric}
Assume there exists a set of feasible customer  flows $\{f_m(u,v)\}_{(u,v),m}$. Then, it is {\em always} possible to find a set of feasible rebalancing flows $\{f_R(u,v)\}_{(u,v)}$.

\end{theorem}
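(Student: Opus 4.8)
The plan is to reduce the existence of a feasible rebalancing flow to a classical feasible-circulation (or equivalently, max-flow) argument on an auxiliary network. Fix a set of feasible customer flows $\{f_m(u,v)\}_{(u,v),m}$. Define the aggregate customer flow $f_C(u,v) := \sum_{m\in\M} f_m(u,v)$ on each edge, and let $b(v) := \sum_{m\in\M} 1_{v=s_m}\lambda_m - \sum_{m\in\M} 1_{v=t_m}\lambda_m$ be the net demand that rebalancing must reverse. By the customer flow-conservation constraints \eqref{eqn:pdmbal}--\eqref{eqn:pdmsink}, the out-minus-in imbalance of $f_C$ at each node $v$ is exactly $b(v)$; the rebalancing constraint \eqref{eqn:pdrbal} asks for $f_R\ge 0$ whose imbalance at $v$ is $-b(v)$, subject to the residual capacity $f_R(u,v)\le c(u,v)-f_C(u,v) =: \hat c(u,v)$ on each edge (which is nonnegative by \eqref{eqn:pdcong}). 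So I need to show that an $\hat c$-bounded flow with prescribed node imbalances $-b(v)$ exists.

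The key step is to invoke the Gale--Hoffman feasibility theorem for flows with supplies and demands: such a flow exists if and only if for every cut $(\S,\bar\S)$ the total demand inside $\S$ does not exceed the capacity entering $\S$ plus the supply inside $\S$ — equivalently, $-\sum_{v\in\S} b(v) \le \sum_{(u,v):u\in\bar\S,v\in\S}\hat c(u,v)$, together with the matching condition on $\bar\S$. Now $-\sum_{v\in\S} b(v) = \sum_{m}1_{t_m\in\S}\lambda_m - \sum_m 1_{s_m\in\S}\lambda_m$, which by Lemma \ref{lemma:netflow} equals $F_{\text{in}}(\S,\bar\S) - F_{\text{out}}(\S,\bar\S)$. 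So the Gale--Hoffman condition becomes
\[
F_{\text{in}}(\S,\bar\S) - F_{\text{out}}(\S,\bar\S) \le \sum_{(u,v):\,u\in\bar\S,\,v\in\S}\big(c(u,v)-f_C(u,v)\big) = C_{\text{in}}(\S,\bar\S) - F_{\text{in}}(\S,\bar\S),
\]
i.e. $2F_{\text{in}}(\S,\bar\S) \le C_{\text{in}}(\S,\bar\S) + F_{\text{out}}(\S,\bar\S)$. Using the feasibility of the customer flow, Lemma \ref{lemma:necessary} gives $F_{\text{out}}(\S,\bar\S)\le C_{\text{out}}(\S,\bar\S)$ and (applied to the reversed cut) $F_{\text{in}}(\S,\bar\S)=F_{\text{out}}(\bar\S,\S)\le C_{\text{out}}(\bar\S,\S)=C_{\text{in}}(\S,\bar\S)$; hence $2F_{\text{in}}(\S,\bar\S)\le C_{\text{in}}(\S,\bar\S)+C_{\text{out}}(\bar\S,\S)$, and I then need $C_{\text{out}}(\bar\S,\S)\le C_{\text{out}}(\S,\bar\S)$ to close the chain — but by \emph{capacity symmetry} applied to the cut $(\S,\bar\S)$ these two are in fact \emph{equal}, so $2F_{\text{in}}(\S,\bar\S)\le C_{\text{in}}(\S,\bar\S)+C_{\text{out}}(\S,\bar\S) = 2C_{\text{in}}(\S,\bar\S)$, and we are done. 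The symmetric condition on $\bar\S$ follows by the same computation with $\S$ and $\bar\S$ exchanged.

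The main obstacle — and the place where capacity symmetry is genuinely used — is precisely this last inequality $C_{\text{out}}(\bar\S,\S) \le C_{\text{out}}(\S,\bar\S)$: without symmetry one only gets bounds of the form $F_{\text{in}}\le C_{\text{in}}$ and $F_{\text{out}}\le C_{\text{out}}$ separately, which do not combine to the needed cut inequality, and indeed a feasible rebalancing flow need not exist on an asymmetric network. A secondary technical point is to set up the Gale--Hoffman / feasible-circulation statement cleanly — either by adding a super-source and super-sink with arcs of capacity $\max(b(v),0)$ and $\max(-b(v),0)$ and reducing to max-flow = total supply, or by citing the Hoffman circulation theorem directly — and to verify the residual capacities $\hat c(u,v)\ge 0$, which is immediate from \eqref{eqn:pdcong}. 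I would also remark that the resulting $f_R$ can be taken integral when $c$, $\lambda$ and the $f_m$ are integral, by integrality of max-flow, which is useful for the algorithmic sections.
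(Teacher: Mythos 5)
Your overall strategy is sound and genuinely different from the paper's. You reduce the problem to a single\hyp{}commodity transshipment feasibility question on the residual network $\hat c = c - \sum_m f_m$ and invoke Gale--Hoffman, whereas the paper proves the same fact from scratch by contradiction: it builds a ``partial rebalancing flow'' by repeated path augmentation (Lemma \ref{lemma:satcut}), shows that any remaining defective origins and destinations must be separated by a cut of saturated edges, and then derives $C_{\text{in}}(\S,\bar\S) > C_{\text{out}}(\S,\bar\S)$ across that cut --- contradicting capacity symmetry. In effect the paper re-derives the ``hard'' direction of the feasibility theorem you are citing, so your route buys brevity, makes the integrality remark immediate (which the paper only asserts informally in Section \ref{sec:realtime}), and exposes cleanly that the sufficient condition in the symmetric case coincides with condition 2) of Theorem \ref{thm:feasreb} applied to the residual graph. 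The paper's route buys self-containedness. One caution on citations: Lemma \ref{lemma:necessary} assumes the existence of a feasible \emph{rebalancing} flow in its hypothesis, which is what you are trying to prove; the inequalities you actually need, $F_{\text{out}}(\S,\bar\S)\le C_{\text{out}}(\S,\bar\S)$ and $F_{\text{in}}(\S,\bar\S)\le C_{\text{in}}(\S,\bar\S)$, follow directly from constraint \eqref{eqn:pdcong} for the feasible customer flows alone, so cite that instead.

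There is, however, a concrete error in your verification of the cut condition. The rebalancing flow must deliver net $b(v)$ \emph{into} each node $v$, so the Gale--Hoffman condition for a set $\S$ is that the net demand of $\S$, namely $\sum_{v\in\S} b(v)$, be at most the residual capacity \emph{entering} $\S$; you instead paired $-\sum_{v\in\S} b(v)$ with $\hat C_{\text{in}}(\S,\bar\S)$ (that quantity should be paired with $\hat C_{\text{out}}(\S,\bar\S)$, being the condition for $\bar\S$). This mismatch produces the target inequality $2F_{\text{in}}(\S,\bar\S)\le C_{\text{in}}(\S,\bar\S)+F_{\text{out}}(\S,\bar\S)$, which is not the right condition and is in fact false in general (take $\S$ around a popular destination, with $F_{\text{in}}$ near $C_{\text{in}}$ and $F_{\text{out}}$ small); accordingly your chain of inequalities only establishes $2F_{\text{in}}\le 2C_{\text{in}}$, which does not imply it. The fix is immediate and makes the proof cleaner: with the correct pairing and Lemma \ref{lemma:netflow}, the condition for $\S$ reads
\[
F_{\text{out}}(\S,\bar\S)-F_{\text{in}}(\S,\bar\S)\;=\;\sum_{v\in\S} b(v)\;\le\;\hat C_{\text{in}}(\S,\bar\S)\;=\;C_{\text{in}}(\S,\bar\S)-F_{\text{in}}(\S,\bar\S),
\]
i.e.\ $F_{\text{out}}(\S,\bar\S)\le C_{\text{in}}(\S,\bar\S)$, which follows from $F_{\text{out}}(\S,\bar\S)\le C_{\text{out}}(\S,\bar\S)$ (feasibility of the customer flows) together with capacity symmetry $C_{\text{out}}(\S,\bar\S)=C_{\text{in}}(\S,\bar\S)$; the condition for $\bar\S$ reduces symmetrically to $F_{\text{in}}(\S,\bar\S)\le C_{\text{out}}(\S,\bar\S)$. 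With that correction your proof is complete and, in my view, preferable to the paper's in economy, while still isolating capacity symmetry as the single place the hypothesis is used.
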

\begin{proof}
We prove the theorem for the special case where no node $v\in \V$ is associated with both an origin and a destination for the transportation requests in $\mathcal M$. This is without loss of generality, as the general case where a node $v$ has  both an origin and a destination assigned can be reduced to this special case, by associating with node $v$ a ``shadow" node so that (i) all destinations are assigned to the shadow node and (ii) node $v$ and its shadow node are mutually connected via an infinite-capacity, zero-travel-time edge.

We start the proof by defining the concepts of \emph{partial rebalancing flows} and {\em defective origins and destinations}. Specifically, a partial rebalancing flow, denoted as $\{\hat f_R(u,v)\}_{(u,v)}$, is a set of mappings from $\E$ to $\reals_{\geq 0}$ obeying  the following properties:
\begin{enumerate}
\item It satisfies constraint \eqref{eqn:pdrbal} at every node that is not an origin nor a destination, that is $\forall \, v\in \{\V\setminus\{ \{s_m\}_m \cup \{t_m\}_m\}\}$,
\begin{align*}
 \sum_{u\in \V} \hat f_R(u,v) =  \sum_{w\in \V} \hat f_R(v,w). 
 \end{align*}
 \item It violates constraint \eqref{eqn:pdrbal} in the ``$\leq$ direction" at every node that is an origin, that is $\forall \, v\in \V \text{ such that } \, \exists m\in \M: v=s_m$,
\begin{align*}
 \sum_{u\in \V} \hat f_R(u,v)
 \leq  \sum_{w\in \V} \hat f_R(v,w) + \sum_{m\in \mathcal M} 1_{v = s_m}\lambda_m.
\end{align*} 

 \item  It violates constraint \eqref{eqn:pdrbal} in the ``$\geq$ direction" at every node that is a destination, that is $\forall \,v\in \V \text{ such that } \, \exists m\in \M: v=t_m$,
\begin{align*} \sum_{u\in \V} \hat f_R(u,v) +\sum_{m\in \mathcal M} 1_{v = t_m}\lambda_m
 \geq  \sum_{w\in \V} \hat f_R(v,w).
\end{align*} 
\item The combined customer and partial rebalancing flows $\{f_m(u,v),\hat f_R(u,v)\}_{(u,v),m}$ satisfy Equation \eqref{eqn:pdcong} for every edge $(u,v)\in \E$.
\end{enumerate}
Note that the trivial zero flow, that is  $\hat f_R(u,v)=0$ for all $(u,v)\in \E$, is a partial rebalancing flow (in other words, the set of partial rebalancing flows in not empty). Clearly a feasible rebalancing flow is also a partial rebalancing flow, but the opposite is not necessarily true.

For a given partial rebalancing flow, we denote an origin node, that is a node $v\in \V$ such that $v = s_m$ for some $m=1,\ldots, M$, as a \emph{defective} origin if Equation \eqref{eqn:pdrbal} is not satisfied at $v=s_m$ (in other words, the strict inequality $<$ holds). Analogously, we denote a destination node, that is a node $v\in \V$ such that $v = t_m$ for some $m=1,\ldots, M$, as a \emph{defective} destination if Equation \eqref{eqn:pdrbal} is not satisfied at $v=t_m$ (in other words, the strict inequality $>$ holds). The next lemma (whose proof is provided in 
\ifarxiv
the Supplementary Material)
\else
\cite{RZ-FR-MP:16aEV})
\fi
links the concepts of partial rebalancing flows and defective origins/destinations.

\begin{lemma}[Co-existence of defective origins/destinations]
\label{lemma:prsourcesink}
For every partial rebalancing flow that is not a feasible rebalancing flow, there exists at least one node $u\in \V$ that is a defective origin, {\em and} one node $v\in \V$ that is a defective destination.
\end{lemma}

For a given set of customer flows $\{f_m(u,v)\}_{(u,v),m}$ and partial rebalancing flows $\{\hat f_R(u,v)\}_{(u,v)}$, we call an edge $(u,v)\in \E$ \emph{saturated} if Equation  \eqref{eqn:pdcong} holds with equality for that edge. We call a path \emph{saturated} if at least one of the edges along the  path is saturated. We now prove the existence of a special partial rebalancing flow where defective destinations and defective origins are separated by a graph cut formed exclusively by saturated edges (this result, and its consequences, are illustrated in Figure \ref{fig:UnbalancedCut}). 

\begin{lemma}
\label{lemma:satcut}
Assume there exists a set of {\em feasible} customer  flows $\{f_m(u,v)\}_{(u,v),m}$, but there does not exist a set of feasible rebalancing flows $\{f_R(u,v)\}_{(u,v)}$. Then, there exists a partial rebalancing flow $\{\hat f_R(u,v)\}_{(u,v)}$ that induces a graph cut $(\S, \bar \S)$ with the following properties: (i) all defective destinations are in $\S$, (ii) all defective origins are in  $\bar \S$, and (iii) all edges in $(S, \bar \S)$ are saturated.
\end{lemma}

\begin{figure}[t]
\centering
\includegraphics[width=0.22\textwidth]{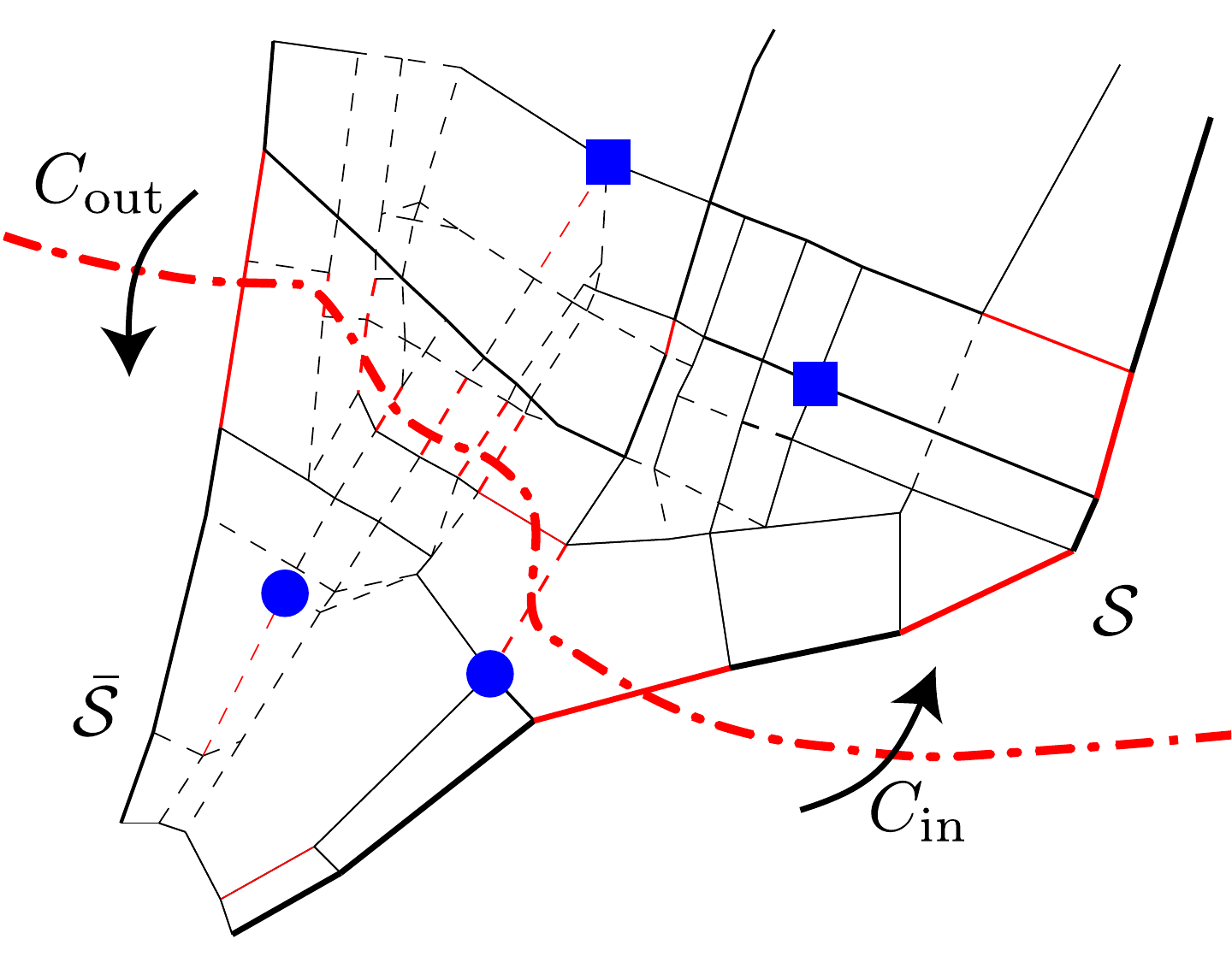}
\vspace{-1em}
\caption{A graphical representation of Lemma \ref{lemma:satcut}. If there exists a set of feasible customer flows but there does not exist a set of
feasible rebalancing flows, one can find a partial rebalancing flow where all the defective origins, represented as blue circles, are separated from all the defective destinations, represented as blue squares, by a cut of saturated edges (shown in red). Note that not all saturated edges necessarily  belong to the cut. In the proof of Theorem \ref{thm:symmetric} we show that the capacity of such a cut $(\S,\bar \S)$ is asymmetric, i.e., $C_\text{out}<C_\text{in}$ -- a contradiction that leads to the claim of Theorem \ref{thm:symmetric}.}

\label{fig:UnbalancedCut}
\end{figure}

We are now in a position to prove Theorem \ref{thm:symmetric}. The proof is by contradiction. Assume that a set of feasible rebalancing flows $\{f_R(u,v)\}_{(u,v)}$ does not exist.  Then Lemma \ref{lemma:satcut} shows that there exists a partial rebalancing flow $\{\hat f_R(u,v)\}_{(u,v)}$ and a cut ($\S,\bar \S$) such that all defective destinations under  $\{\hat f_R(u,v)\}_{(u,v)}$ belong to $\S$ and all defective origins belong to $\bar \S$. Let us denote the sum of all partial rebalancing flows across cut $(\S, \bar \S)$ as
\[
\hat F^{\text{reb}}_{\text{out}}(\S,\bar{\S}):=\sum_{u\in \S,v\in\bar\S} \hat f_R(u,v),
\]
and, analogously, define $\hat F^{\text{reb}}_\text{in}(\S,\bar \S):=\hat F^{\text{reb}}_\text{out}(\bar \S, \S)$.
Since all edges in the cut $(\S, \bar \S)$ are saturated under  $\{\hat f_R(u,v)\}_{(u,v)}$, one has, due to Equation  \eqref{eqn:pdcong}, the equality
\[
C_\text{out}(\S,\bar{\S})=F_{\text{out}}(\S,\bar{\S})+ \hat F^{\text{reb}}_{\text{out}}(\S,\bar{\S}).
\]
Additionally, again due to Equation \eqref{eqn:pdcong}, one has the inequality
\[
F_{\text{in}}(\S,\bar{\S})+ \hat F^{\text{reb}}_{\text{in}}(\S,\bar{\S})\leq C_{\text{in}}(\S,\bar{\S}).
\]
Combining the above equations, one obtains

\vspace{-0.7em}
{\small
\[
F_{\text{in}}(S,\bar{S})+ \hat F^{\text{reb}}_{\text{in}}(\S,\bar{\S}) -F_{\text{out}}(\S,\bar{\S})- \hat F^{\text{reb}}_{\text{out}}(\S,\bar{\S}) \leq C_{\text{in}}(\S,\bar{\S}) - C_{\text{out}}(\S,\bar{\S}).
\]
}
To compute $\hat F^{\text{reb}}_{\text{in}}(\S,\bar{\S}) -\hat F^{\text{reb}}_{\text{out}}(\S,\bar{\S})$, we follow a procedure similar to the one used in Lemma \ref{lemma:netflow}. Summing  Equation \eqref{eqn:pdrbal} over all nodes in $\S$, one obtains,

\vspace{-0.6em}
{\small
\begin{align*}
&\sum_{v\in \S} \left[ \sum_{u\in \V} \hat f_R(u,v) + \sum_{m\in \M} 1_{v=t_m}\lambda_m \right]\\
&\qquad\qquad >\sum_{v\in \S} \left[\sum_{w\in\V}\hat f_R(v,w) + \sum_{m\in \M} 1_{v=s_m} \lambda_m \right]. 
\end{align*}
}
The strict inequality is due to the fact that for a partial rebalancing flow that is not feasible there exists at least one defective destination (Lemma \ref{lemma:prsourcesink}), which, by construction, must belong to $\S$. Simplifying those flows $\hat f_R(u,v)$ for which both $u$ and $v$ are in $\S$ (as such flows appear on both sides of the above inequality), one obtains
\begin{align*}
\hat F^{\text{reb}}_{\text{in}}(\S,\bar{\S}) -\hat F^{\text{reb}}_{\text{out}}(\S,\bar{\S}) >& \sum_{m\in \M}  1_{s_m\in \S}\lambda_m- \sum_{m\in \M } 1_{t_m \in \S}\lambda_m. 
\end{align*}
Also, by Lemma \ref{lemma:netflow},  
\[
F_{\text{out}}(\S, \bar{\S}) - F_{\text{in}}(\S, \bar{\S})= \sum_{m\in \M }1_{s_m \in \S} \lambda_m - \sum_{m\in \M }1_{t_m \in \S} \lambda_m.
\]
Collecting all the results so far, we conclude that 
\begin{align*}
0&<F_{\text{in}}(S,\bar{S})+ \hat F^{\text{reb}}_{\text{in}}(\S,\bar{\S}) -F_{\text{out}}(\S,\bar{\S})- \hat F^{\text{reb}}_{\text{out}}(\S,\bar{\S})\\
&= C_{\text{in}}(\S,\bar{\S}) - C_{\text{out}}(\S,\bar{\S}).
\end{align*}

Hence, we reached the conclusion that $C_{\text{in}}(\S,\bar{\S}) - C_{\text{out}}(\S,\bar{\S})>0$, or, in other words, the capacity of graph $G(\V, \E)$ across cut $(\S, \bar \S)$ is {\em not} symmetric. This contradicts the assumption that graph $G(\V, \E)$ is capacity-symmetric, and the claim follows.
\end{proof}

The importance of Theorem \ref{thm:symmetric} is twofold. First, perhaps surprisingly, it shows that for symmetric road networks it is {\em always} possible to rebalance the autonomous vehicles {\em without} increasing congestion -- in other words, the rebalancing of autonomous vehicles in a symmetric road network does {\em not} lead to an increase in congestion.
Second, from an algorithmic standpoint, if the cost function in the CRRP only depends on the customer flows (that is, $\rho = 0$ and the goal is to minimize the customers' travel times), then the CRRP problem can be {\em decoupled} and the customers and rebalancing flows can be solved separately without loss of optimality. This insight will be instrumental in Section \ref{sec:realtime} to the design of real-time algorithms for routing and rebalancing.

We conclude this section by noticing that the CRRP, from a computational standpoint, can be reduced to an instance of the Minimum-Cost Multi-Commodity Flow problem (Min-MCF), a classic problem in network flow theory \cite{RKA-TLM-JBO:93}. The problem can be efficiently solved either  via  linear programming (the size of the linear program is $|\E| (M+1)$), or via specialized combinatorial algorithms \cite{AVG-ET-RET:89,TL-FM-SP-CS-ET-ST:95, AVG-JDO-SP-CS:98}. However, the solution to the CRRP provides \emph{static fractional} flows, which are not directly implementable for the operation of actual AMoD systems. Practical algorithms (inspired by the theoretical CRRP model) are presented in the next section.

\section{Real-time Congestion-Aware \\Routing and Rebalancing}
\label{sec:realtime}

A natural approach to routing and rebalancing would be to periodically resolve the CRRP  within a receding-horizon, batch-processing scheme (a common scheme for the control of transportation networks  \cite{KTS-NHD-DHL:10, MP-SLS-EF-DR:12, RZ-FR-MP:16EV}). This approach, however, is not directly implementable as the solution to the CRRP provides {\em fractional} flows (as opposed to routes for the {\em individual} vehicles). This shortcoming can be addressed by considering an integral version of the CRRP (dubbed integral CRRP), whereby the flows are {\em integer}-valued and can  be thus easily translated into routes for the individual vehicles, e.g. through a flow decomposition algorithm \cite{LRF-DRF:62}. The integral CRRP, however, is an instance of the 
integral Minimum-Cost Multi-Commodity Flow problem, which is known to be NP-hard \cite{RMK:74, SE-AI-AS:76}. 
Na\"{\i}ve rounding techniques are inapplicable: rounding a solution for the (non-integral) CRRP does not yield, in general, feasible integral flows, and hence feasible routes. For example, continuity of vehicles and customers can not be guaranteed, and vehicles may appear and disappear along a route. In general, to the best of our knowledge, there are no polynomial-time approximation schemes for the integral Minimum-Cost Multi-Commodity Flow problem. 

On the positive side, the integral CRRP admits a decoupling result akin to Theorem \ref{thm:symmetric}: given a set of feasible, {\em integral} customer flows, one can always find a set of feasible, {\em integral} rebalancing flows. (In fact, 
the proof of Theorem \ref{thm:symmetric} does not exploit anywhere the property that the flows are fractional, and thus the proof extends virtually unchanged to the case where the flows are integer-valued). Our approach is to leverage this insight (and more in general the theoretical results from Section \ref{sec:analysis}) to design a heuristic, yet efficient approximation to the integral CRRP that (i) scales to large-scale systems, and (ii) is general, in the sense that can be broadly applied to time-varying, asymmetric networks. 

Specifically, we consider as objective the minimization of the customers' travel times, which, from Section \ref{sec:analysis} and the aforementioned discussion about the generalization of Theorem \ref{thm:symmetric} to integral flows, {\em suggests} that customer routing can be decoupled from vehicle rebalancing (strictly speaking, this statement is only valid for static and symmetric networks -- its generalization beyond these assumptions will be addressed numerically in Section \ref{sec:num}). Accordingly, to emulate the real-world operation of an AMoD system, we divide a given city into geographic regions (also referred to as ``stations'' in some formulations) \cite{MP-SLS-EF-DR:12, RZ-MP:15}, and each arriving customer is assigned the closest vehicle \emph{within that region} (vehicle imbalance across regions is handled separately by the vehicle rebalancing algorithm, discussed below). We apply a greedy, yet computationally-efficient and congestion-aware approach for customer routing where customers are routed to their destinations using the shortest-time path as computed by an $A^*$ algorithm \cite{PEA-NKN-BR:68}. The travel time along each edge is computed using a heuristic delay function that is related to the current volume of traffic on each edge. In this work, for each edge $(u,v)\in \E$ we use the  simple   Bureau of Public Roads (BPR) delay model \cite{BPR:64}

\vspace{-0.5em}
{\small
\begin{equation*}
t_d (u,v) := t(u,v)\left(1 + \alpha \left(\frac{f(u,v)}{c(u,v)}\right)^{\beta}\right),
\end{equation*}%
}%
where $f(u,v) := \sum_{m=1}^{M} \, f_m(u,v) + f_R(u,v)$ is the total flow on edge $(u,v)$, and $\alpha$ and $\beta$ are usually set to $0.15$ and $4$ respectively. Note that customer routing is {\em event-based}, i.e, a routing choice is made as soon as a customer arrives.

Separately from customer routing, vehicle rebalancing from one region to another is performed every $t_{\text{hor}}>0$ time units as a batch process (unlike customer routing, which is an event-based process).
Denote by $v_i(t)$ the number of vehicles in region $i$ at time $t$, and by $v_{ji}(t)$ the number of vehicles traveling from region $j$ to $i$ that will arrive in the next $t_{\text{vicinity}}$ time units. Let $v^{\text{own}}_i(t) := v_i(t) + \sum_j v_{ji}(t)$ be the number of vehicles currently ``owned" by region $i$ (i.e., in the vicinity of such region). Denote by $v^e_i(t)$ the number of excess vehicles in region $i$, or the number of vehicles left after servicing the customers waiting within region $i$. From its definition, $v^e_i(t)$ is given by $v^e_i(t) = v^{\text{own}}_i(t) - c_i(t)$, where $c_i(t)$ is the number of customers within region $i$. Finally, denote by $v^d_i(t)$ the desired number of vehicles within region $i$. For example, for an even distribution of excess vehicles, $v^d_i(t) \propto \sum_i v^e_i(t) / N$, where $N$ is the number regions. Note that the $v^d_i(t)$'s are rounded so they take on integer values. The set of origin regions (i.e., regions that should send out vehicles), $S_R$, and destination regions (i.e., regions that should receive vehicles), $T_R$, for the rebalancing vehicles are then determined by comparing $v^e_i(t)$ and $v^d_i(t)$, specifically, 
\begin{align*}
\text{if } v^e_i(t) &> v^d_i(t), \;\;\;\; \text{region } i \in S_R \\
\text{if } v^e_i(t) &< v^d_i(t), \;\;\;\; \text{region } i \in T_R.
\end{align*}
We assume the residual capacity $c_R(u,v)$ of an edge $(u,v)$, defined as the difference between its overall capacity $c(u,v)$  and the current number of vehicles along that edge, is known and remains approximately constant over the rebalancing time horizon. 
In case the overall rebalancing problem is not feasible (i.e. it is not possible to move all excess vehicles to regions that have a deficit of vehicles while satisfying the congestion constraints), we define slack variables with cost $C$ that allow the optimizer to select a subset of vehicles and rebalancing routes of maximum cardinality such that each link does not become congested. The slack variables are denoted as $ds_i$ for each $i \in S_R$, and $dt_j$ for each $j \in T_R$. 

 Every $t_{\text{hor}}$ time units, the rebalancing vehicle routes are computed by solving the following integer linear program
 
 \vspace{-0.3em}
 {\small
\begin{align}
&\underset{ f_R(\cdot, \cdot), \{ds_i\},  \{dt_j\}}{\text{minimize}} && \! \!\sum_{(u,v) \in \E} t(u,v)\, f_R(u,v)\ + \sum_{i \in S_R} C ds_i + \sum_{i \in T_R} C dt_i \notag \\
&\quad\;\;\text{subject to} && \sum_{u \in \V} f_R(u,v) + 1_{v \in S_R} (v^e_v(t) - v^d_v(t) - ds_v)   \notag \\
& && = \sum_{w \in V}f_R(v,w) + 1_{v \in T_R} (v^d_v(t) - v^e_v(t) - dt_v), \notag \\
& && \qquad\qquad \qquad \qquad\qquad \qquad \qquad \text{for all } v \in \V \notag \\
& && f_R(u,v) \leq c_R(u,v), \;\;\;\; \text{for all }(u,v) \in \E \notag  \\
& && f_R(u,v)  \in \N, \qquad \text{for all } (u,v) \in \E \notag\\
& && ds_i, dt_j \in \N, \qquad  \text{for all }  i \in S_R, j \in T_R \notag
\end{align}
}
The set of (integral) rebalancing flows $\{f_R(u,v)\}_{(u,v)}$ is then decomposed into a set of rebalancing paths  via a flow decomposition algorithm \cite{LRF-DRF:62}. Each rebalancing path connects one origin region with one destination region: thus, rebalancing paths represent the set of routes that excess vehicles should follow to rebalance to  regions with a deficit of vehicles.

The rebalancing optimization problem is an instance of the Minimum Cost Flow problem. If all edge capacities are integral, the linear relaxation of the Minimum Cost Flow problem enjoys a totally unimodular constraint matrix \cite{RKA-TLM-JBO:93}. Hence, the linear relaxation will necessarily have an integer optimal solution, which will be a fortiori an optimal solution to the original Minimum Cost Flow problem. It follows that an integer-valued solution to the rebalancing optimization problem can be computed efficiently, namely in polynomial time, e.g., via linear programming. Several efficient combinatorial algorithms \cite{RKA-TLM-JBO:93} are also available, whose computational performance is typically significantly better.

The favorable computational properties of the routing and rebalancing algorithm presented in this section enable application to large-scale systems, as described next.

\section{Numerical Experiments}
\label{sec:num}

In this section, we characterize the effect of rebalancing on congestion in asymmetric network and  explore the performance of the algorithm presented in Section \ref{sec:realtime} on real-world road topologies with real customer demands.

\subsection{Characterization of Congestion due to Rebalancing in Asymmetric Networks}
\label{sec:congrelax}

The theoretical results in Section \ref{sec:analysis} are proven for capacity-symmetric networks, which are in general a reasonable model for typical urban road networks (we refer the reader to 
\ifarxiv
the Supplementary Material
\else
\cite{RZ-FR-MP:16aEV}
\fi
 for a detailed analysis of capacity symmetry for major U.S. cities).
Nevertheless, 
it is of interest to characterize the applicability of our theoretical results (chiefly, the existential result in Theorem \ref{thm:symmetric}) to road networks that  significantly violate the capacity-symmetry property. In other words, we study to what degree rebalancing might lead to an increase in congestion if the network is asymmetric. 

To this purpose, we compute solutions to the CRRP for road networks with varying degrees of capacity asymmetry and we compare corresponding travel times
to those obtained by computing optimal routes in the absence of rebalancing (as it would be the case, e.g., if the vehicles were privately owned). We focus on the road network portrayed in Figure \ref{fig:Manhattanmap}, which captures all major streets and avenues in Manhattan. 
Transportation requests are based on actual taxi rides in New York City on March 1, 2012 from 6 to 8 p.m. ({courtesy of the New York Taxi and Limousine Commission}). We randomly selected about one third of the trips that occurred in that time frame (roughly 17,000 trips) and we adjusted the capacities of the roads such that the flows induced by these trips would approach the threshold of congestion. The roads considered all have similar speed limits and comparable number of lanes and thus we assign to each edge in the network the same capacity, specifically, one vehicle every 23.6 seconds. This capacity is consistent with the observations that (i) the customer flow is only $30\%$ of the real one (so road capacity is reduced accordingly) and (ii) taxis only contribute to a fraction of the overall traffic in Manhattan. Nevertheless, we stress that the capacity was selected specifically to ensure that the flow induced by the trips would approach the threshold of congestion before any asymmetry is induced.
To investigate the effects of network asymmetry, we  introduce an {\em artificial capacity asymmetry} into the baseline Manhattan road network
by progressively reducing the capacity of all northbound avenues. 

In order to gain a \emph{quantitative} understanding of the effect of rebalancing on congestion and travel times, we introduce slack variables $\delta_C(u,v)$, associated with a cost $c_c(u,v)$, to each congestion constraint \eqref{eqn:pdcong}. The cost $c_c(u,v)$ is selected so that the optimization algorithm will select a congestion-free solution whenever one is available. 
Once a solution is found, the actual travel time on each (possibly congested) link is computed with the heuristic BPR delay model \cite{BPR:64} presented in Section \ref{sec:realtime}. This approach maintains feasibility even in the congested traffic regime, and hence allows us to assess the impact of rebalancing on congestion in asymmetric networks.

Figure \ref{fig:tisims} summarizes the results of our simulations. In the baseline case, no artificial capacity asymmetry is introduced, i.e., the fractional capacity reduction of northbound avenues is equal to 0\%. In this case, the customer routing problem with no rebalancing (essentially, the CRRP problem with the rebalancing flows constrained to be equal to zero) admits a congestion-free solution. On the other hand, the CRRP requires a (very small) relaxation of the congestion constraints. Overall, the difference between the travel times in the two cases is very small and approximately equal to 2.12\%, in line with the fact that New York City's road graph has largely symmetric capacity, as discussed in Section \ref{sec:model} and shown in 
\ifarxiv
the Supplementary Material.
\else
\cite{RZ-FR-MP:16aEV}.
\fi
Interestingly, even with a massive $50\%$ reduction in northbound capacity, travel times when rebalancing vehicles are present are within $4.12\%$ of those obtained assuming no rebalancing is performed. Collectively, these results show that the existential result in Theorem \ref{thm:symmetric}, proven under the assumption of a symmetric network, appears to extend (even though approximately) to asymmetric networks. In particular, it appears that vehicle rebalancing does not lead to an appreciable increase in congestion under very general conditions.

We conclude this section by noticing that for a $40\%$ reduction in capacity, the travel times with vehicle rebalancing dip slightly lower than those without. This effect is due to our use of the BPR link delay model: while in our theoretical model the time required to traverse a link is constant so long as a link is uncongested, the link delay in the BPR model varies by as much as $15\%$ between free-flow and the onset of congestion.

\begin{figure}[t]
\centering
\subfigure{\label{fig:Manhattanmap} \includegraphics[width=0.068\textwidth]{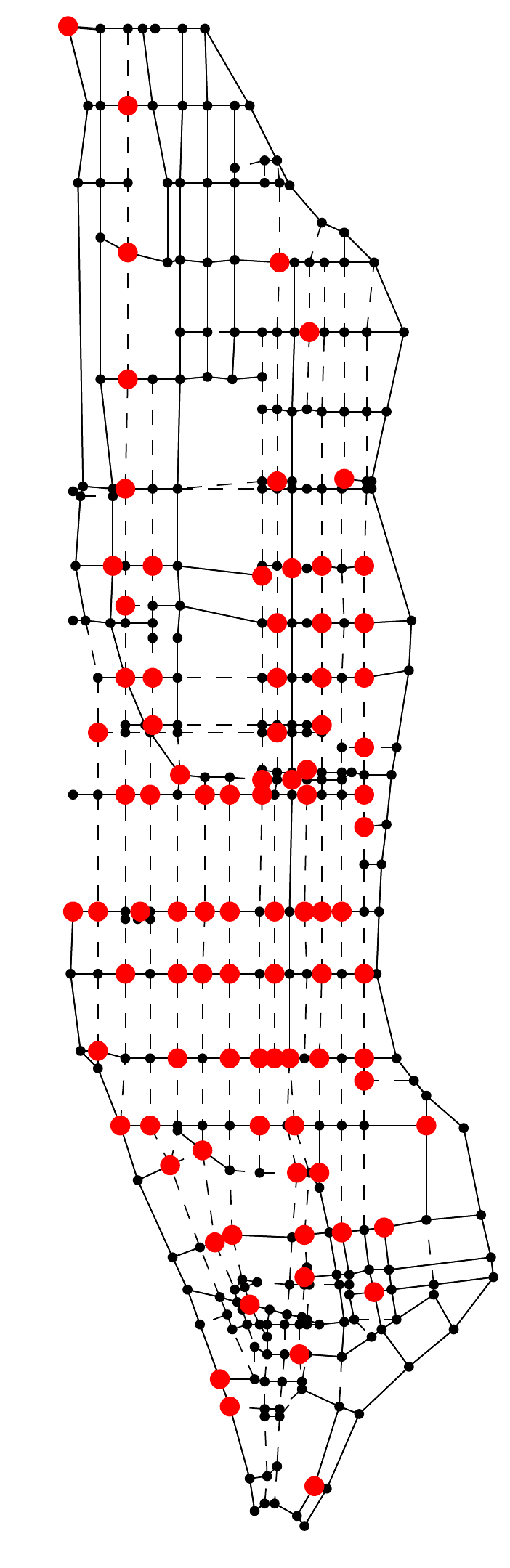}} \hspace{1em}
\subfigure{\label{fig:tisims}
\includegraphics[width=.27\textwidth]{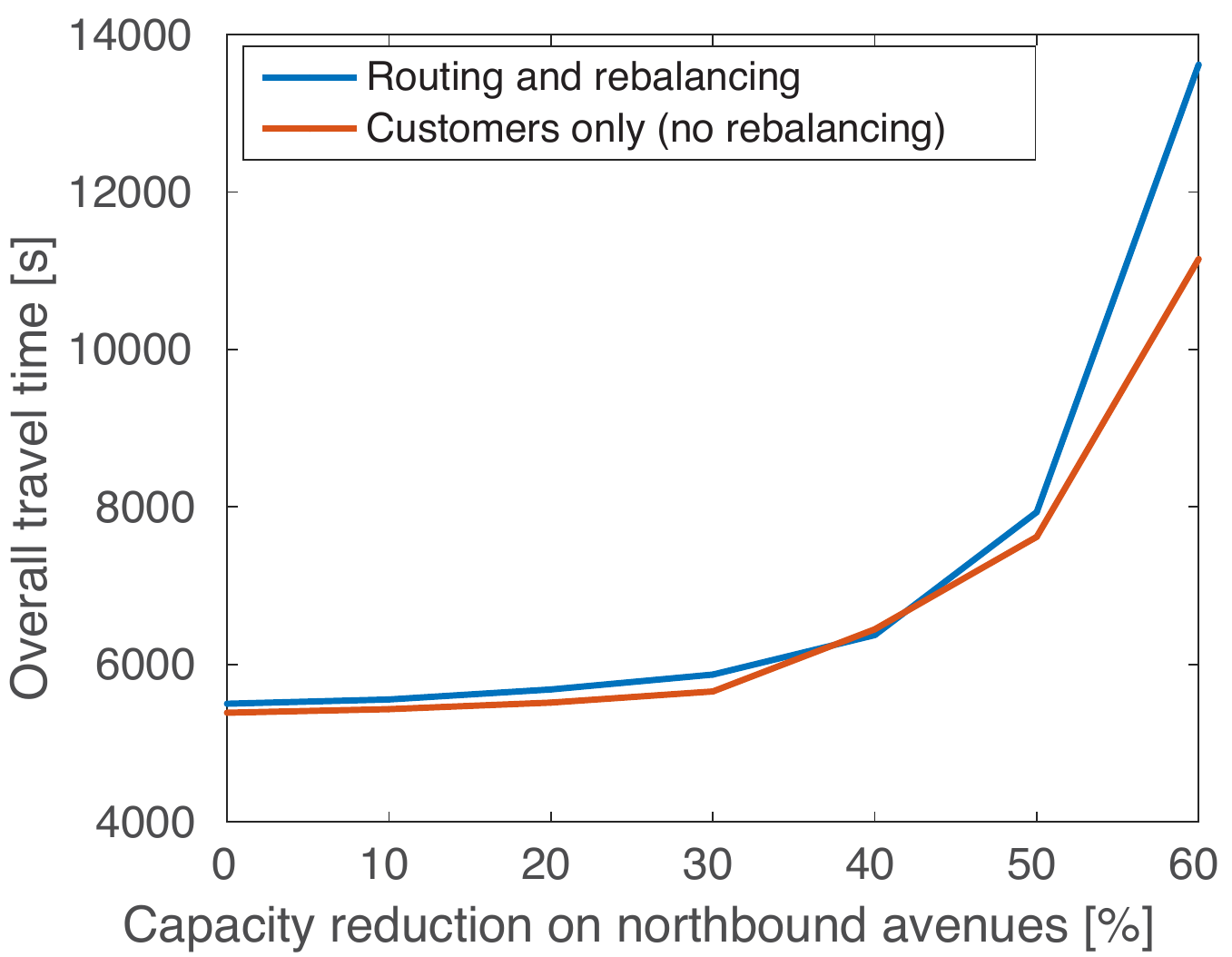}}
\vspace{-1em}
\caption{Left: Manhattan road network. One-way roads are represented as dashed lines. Centers of rebalancing regions are represented in red.
Right: Customer travel times with and without rebalancing for different levels of network asymmetry.}
\end{figure}

\subsection{Congestion-Aware Real-time Rebalancing}
\label{sec:rtsim}

In this section we evaluate the performance of the real-time routing and rebalancing algorithm presented in Section \ref{sec:realtime} against a baseline approach that does not explicitly take congestion into account. We simulate $7,000$ vehicles providing service to actual taxi requests on March 1, 2012, for two hours between 6 and 8 p.m., using the same Manhattan road network as in the previous section (see Figure \ref{fig:Manhattanmap}). Taxi requests are clustered into 88 regions corresponding to a subset of nodes in the road network. Road capacities are reduced to account for exogenous vehicles on the roads to the point that congestion occurs along some routes during the simulation. The free flow speed of the vehicles is set to 25 mph (11 m/s) and approximately 55,000  trip requests (from the taxi data set discussed before) are simulated using a time step of 6 seconds. The simulated speed of the vehicles on each link depend on the number of vehicles in the link, and is calculated using the BPR model. Other delay factors such as traffic signals, turning times, and pedestrian blocking are not simulated. 

Three simulations are performed, namely (i) assuming every customer has access to a private vehicle with no rebalancing, (ii) using the congestion-aware routing and rebalancing algorithm presented in Section \ref{sec:realtime}, and (iii) using a baseline rebalancing algorithm. The baseline approach is derived from the real-time rebalancing algorithm presented in \cite{RZ-MP:15}, which is a point-to-point algorithm that computes rebalancing origins and destinations without considering the underlying road network. In the baseline approach, customer routes are computed in the same way as in Section \ref{sec:realtime}. For rebalancing, the origins and destinations are first solved using the algorithm provided in \cite{RZ-MP:15}, then the routes are computed using the $A^*$ algorithm much like the customer routes. In simulations (ii) and (iii), rebalancing is performed every 2 minutes.

Table \ref{tab:rtsim} presents a summary of the performance results for simulations (ii) and (iii). Note that the service time is the total time a customer spends in the system (waiting $+$ traveling).
\begin{table}[htp]
\begin{center}
\caption{Results of the real-time simulations}\label{tab:rtsim}
\begin{tabular}{r|cc}
Performance metric  & Congestion-aware & Baseline\\
\hline\\ [-8pt]
\# of trips completed & 49,585 & 42,219\\
mean wait time (all trips) & 163.57 s & 406.03 s\\
mean travel time (completed trips) & 265.13 s & 275.19 s \\
mean service time (completed trips) & 286.96 s & 324 s \\
\% with wait time $>$ 5 minutes & 5.4\% & 20\% \\
mean \# of rebalancing vehicles & 204  & 1489
\end{tabular}
\end{center}
\vspace{-.2cm}
\end{table}
Only data from simulations (ii) and (iii) are presented in Table \ref{tab:rtsim} because the only applicable performance metric in simulation (i) is the mean travel time which was 264.69 s. Comparing our algorithm with (i), we notice that the additional rebalancing vehicles have no significant impact on the travel time. Comparing our algorithm with (iii), we notice that the congestion-aware algorithm outperforms the baseline algorithm in every metric: low congestion allows the vehicles to service customers faster, resulting in a reduction in wait times as well as travel times. 
The baseline algorithm will send rebalancing vehicles to stations with a deficit of vehicles regardless of the level of congestion in the road network. This results in many more empty vehicles dispatched to rebalance the system (see Table \ref{tab:rtsim}), which causes heavy congestion in the network\footnote{See the Media Extension, available at \url{https://youtu.be/7OivaJi6CHU}}. Our congestion-aware algorithm drastically reduces this effect, resulting in very few congested road links.

\section{Conclusions and Future Work} \label{sec:conc}
In this paper we presented a network flow model of an autonomous mobility-on-demand system on a capacitated road network. We formulated the routing and rebalancing problem and showed that on symmetric road networks, it is always possible to route rebalancing vehicles in a coordinated way that does not increase traffic congestion. 
Using a model road network of Manhattan, we showed that rebalancing did not increase congestion even for moderate degrees of network asymmetry. We leveraged the theoretical insights to develop a computationally efficient real-time congestion-aware routing and rebalancing algorithm and demonstrated its performance over state-of-the-art point-to-point rebalancing algorithms through simulation. This highlighted the importance of congestion awareness in the design and implementation of control strategies for a fleet of self-driving vehicles.

This work opens the field to many future avenues of research. First, note that the solution to the integral CRRP can directly be used as a practical routing algorithm. For large scale systems, high-quality approximate solutions for the integral CRRP may be obtained using randomized algorithms \cite{PR-CDT:87, AS:99}. Second, from a modeling perspective, we would like to study the inclusion of stochastic information (e.g., demand prediction, travel time uncertainty) for the routing and rebalancing problem, as well as a richer set of performance metrics and constraints (e.g., time windows to pick up customers). Third, it is worthwhile to study how our results give intuition into business models for autonomous urban mobility (e.g. fleet sizes). Fourth, it is of interest to explore other approaches that may reduce congestion, including ride-sharing, demand staggering, and integration with public transit to create an intermodal transportation network. Fifth, we would like to explore decentralized architectures for cooperative routing and rebalancing. Finally, we would like to demonstrate the real-world performance of the algorithms using high fidelity microscopic traffic simulators and by implementing them on real fleets of self-driving vehicles.

\ifarxiv
\section*{Acknowledgement}
The authors would like to thank Zachary Sunberg for his analysis on the road network symmetry of U.S. cities.
\else
\newpage
\fi
\bibliographystyle{IEEEtran-no-url}
{\small
\bibliography{../../../bib/main}
}
\ifarxiv
\newpage
\ifnamechange
\section*{Appendix: Proofs of Technical Results}
\else
\section*{Supplementary Material: \\Proofs of Technical Results}
\fi
\begin{proof}[Proof of Lemma \ref{lemma:netflow}]
We compute the sum over all customer flows $m\in\mathcal{M}$ and over all nodes $v\in\mathcal{V}$ of the node balance equation for flow $m$ at node $v$ (Equation \eqref{eqn:pdmsource} if node $v$ is the source of $m$, Equation  \eqref{eqn:pdmsink} if node $v$ is the sink of $m$, or Equation \eqref{eqn:pdmbal} otherwise).
We obtain
 \begin{align*}
\sum_{v\in \S} \sum_{m\in \mathcal{M}}&\left(  \sum_{u\in \V} f_m(u,v) + 1_{v=s_m} \lambda_m \right)= \\
&\sum_{v\in \S} \sum_{m\in \mathcal{M}} \left( \sum_{w\in \V} f_m(v,w) + 1_{v=t_m} \lambda_m\right).
 \end{align*}
For any edge $(u,v)$ such that $u,v \in \S$, the customer flow $f_m(u,v)$ appears on both sides of the equation. 
Thus the equation above simplifies to

\begin{align*}
\sum_{m \in \mathcal{M}}\sum_{v\in \S} &\left(\sum_{u\in \bar \S} f_m(u,v) + 1_{v=s_m} \lambda_m\right)=\\
&\sum_{m \in \mathcal{M}}\sum_{v\in \S} \left(\sum_{w\in \bar \S} f_m(v,w) + 1_{v=t_m} \lambda_m\right),
\end{align*}
which leads to the claim of the lemma
\[
F_{\text{in}}(\S,\bar \S) + \sum_{m\in \M} 1_{s_m \in \S} \lambda_m = F_{\text{out}}(\S,\bar \S) + \sum_{m\in \M}1_{t_m \in \S} \lambda_m.
\]
\end{proof}

 \begin{proof}[Proof of Lemma  \ref{lemma:grossflow}]
Adding Equations  \eqref{eqn:pdmbal}, \eqref{eqn:pdmsource} and \eqref{eqn:pdmsink} over all nodes in $\S$ and over all flows whose origin is in $\S$ and whose destination is in $\bar \S$, one obtains 
 \begin{align*}
 \sum_{m: s_m\in \S, t_m \in \bar \S} \, \, \sum_{v\in \S}& \left(\sum_{u\in \V} f_m(u,v) + 1_{v=s_m}\lambda_m\right) =\\
& \sum_{m: s_m\in \S, t_m \in \bar \S}\sum_{v\in \S} \left(\sum_{w\in \V} f_m(v,w) \right).
 \end{align*}
Flows $f_m(u,v)$ such that both $u$ and $v$ are in $\S$ appear on both sides of the equation. Simplifying, one obtains 
 \begin{align*}
&\sum_{m: s_m\in \S, t_m \in \bar \S}  \lambda_m =  \\
&\sum_{m: s_m \in \S, t_m \in \bar \S} \left( \sum_{v\in \S, w \in \bar \S } f_m(v,w) - \sum_{v\in \S, u \in \bar \S } f_m(u,v)\right)
 \end{align*}
 The first term on the right-hand side represents a lower bound for $F_{\text{out}}(\S,\bar \S)$, since
 \begin{align*}
 F_{\text{out}}(\S,\bar \S) =&  \sum_{m \in \mathcal{M}} \sum_{v\in \S, w \in \bar \S } f_m(v,w)
 \\
 \geq&  \sum_{m: s_m \in \S, t_m \in \bar \S} \sum_{v\in \S, w \in \bar \S } f_m(v,w).
 \end{align*}
 Furthermore, the second term on the right-hand side is upper-bounded by zero. The lemma follows.
 \end{proof}

 \begin{proof}[Proof of Lemma \ref{lemma:necessary}]
The first  condition follows trivially from equation  \eqref{eqn:pdcong}.
As for the second condition, consider a cut $(\S, \bar{\S})$. Analogously  as for the definitions of $F_{\text{in}}(\S, \bar{\S})$ and $F_{\text{out}}(\S, \bar{\S})$, let  $F^{\text{reb}}_{\text{in}}(\S, \bar{\S})$  and $F^{\text{reb}}_{\text{out}}(\S, \bar{\S})$ denote, respectively, the overall rebalancing flow entering (exiting) cut  $(\S, \bar{\S})$. 
Summing equation \eqref{eqn:pdrbal} over all nodes in $S$, one easily obtains  
\[
F^{\text{reb}}_{\text{in}}(\S, \bar{\S}) -F^{\text{reb}}_{\text{out}}(\S, \bar{\S}) = \sum_{m \in \M} 1_{s_m\in \S}\lambda_m - \sum_{m\in \M}1_{t_m \in \S}\lambda_m. 
\]
Combining the above equation with Lemma \ref{lemma:netflow}, one obtains
\[
F^{\text{reb}}_{\text{in}}(\S, \bar{\S}) -F^{\text{reb}}_{\text{out}}(\S, \bar{\S})  = F_{\text{out}}(\S, \bar{\S}) - F_{\text{in}}(\S, \bar{\S}),
\]
in other words, rebalancing flows should make up the difference between the  customer inflows and outflows across cut $(\S,\bar \S)$. Accordingly, the  total inflow of vehicles across $(\S, \bar{\S})$, $F^{\text{tot}}_{\text{in}}(\S, \bar{\S})$, satisfies the inequality 
\begin{align*}
F^{\text{tot}}_{\text{in}}(\S, \bar{\S}) :&= F_{\text{in}}(\S,\bar{\S}) + F^{\text{reb}}_{\text{in}}(\S, \bar{\S}) \\
& =  F_{\text{in}}(\S, \bar{\S}) + F^{\text{reb}}_{\text{out}}(\S, \bar{\S})  + F_{\text{out}}(\S, \bar{\S}) - F_{\text{in}}(\S, \bar{\S}) \\
&\geq F_{\text{out}}(\S, \bar{\S}).
\end{align*}
Since the  customer and rebalancing flows $\{f_m(u,v), \, f_R(u,v)\}_{(u,v),m}$ are feasible, then, by equation \eqref{eqn:pdcong}, $F^{\text{tot}}_{\text{in}}(S, \bar{S}) \leq C_\text{in}(S, \bar{S})$. Collecting the results, one obtains the second condition.
\end{proof}

 \begin{proof}[Proof of Lemma \ref{lemma:prsourcesink}]
By contradiction. Since the flow $\{\hat f_R(u,v)\}_{(u,v)}$ is not a feasible rebalancing flow, there exists at least one defective origin or a defective destination. Assume that there exists at least one defective destination, say a node  $\hat t_j$ where Equation \eqref{eqn:pdrbal} is violated:
\[
\sum_{u\in \V} \hat f_R(u,\hat t_j) +\sum_{m\in \mathcal M} 1_{\hat t_j = t_m}\lambda_m 
>  \sum_{w\in \V} \hat f_R(\hat t_j,w),
\]
Now, assume that there does not exist any defective origin. By summing Equation \eqref{eqn:pdrbal} over all nodes $v\in \V$ and simplifying all flows $\hat f_R(u,v)$ (as they appear on both sides of the resulting equation), one obtains
\begin{align*}
\sum_{v\in \V}\sum_{m\in \M} 1_{v=t_m} \lambda_m >& \sum_{v\in \V}\sum_{m\in \M} 1_{v=s_m} \lambda_m,
\end{align*}
that is $\sum_{m\in \M}\lambda_m >  \sum_{m\in \M} \lambda_m$, which is a contradiction. Noticing that the symmetric case where we assume that there exists at least one defective destination leads to an analogous contradiction, the lemma follows.
\end{proof}

\begin{proof}[Proof of Lemma \ref{lemma:satcut}]
The proof is constructive and constructs the desired partial rebalancing flow by starting with the trivial zero flow $\hat f_R(u,v)=0$ for all $(u,v)\in \E$. Let $\Vdo:=\{\hat s_1, \ldots, \hat s_{|\Vdo|}\}$ and $\Vdd:=\{\hat t_1, \ldots, \hat t_{|\Vdd|}\}$ be the set of defective origins and destinations, respectively, under such flow. Then, the zero flow is iteratively updated according to the following procedure:
\begin{enumerate}
\item Look for a path between a node in $\Vdd$ and a node in $\Vdo$ that is not saturated (note that for rebalancing flows, paths go from  destinations to origins). If no such path exists, quit. Otherwise, go to Step 2. 
\item Add the same amount of flow on all edges along the path until either (i) one of the edges becomes saturated or (ii) constraint \eqref{eqn:pdrbal} is fulfilled either at the defective origin or at the defective destination. Note that the resulting flow remains a partial rebalancing flow. 
\item Update sets $\Vdo$ and $\Vdd$ for the new partial rebalancing flow and go to Step 1.
\end{enumerate}

The algorithm terminates. To show this, we prove the invariant that if a node is no longer defective for the updated partial rebalancing  flow (in other words, Step 2 ends due to condition (ii)), it will not become defective at a later stage. Consider a defective destination node $v$ that becomes non-defective under the updated partial rebalancing flow (the proof for defective origins is analogous). Then, at the subsequent stage it cannot be considered as a destination in Step 1 (as it is no longer in set $\Vdd$). If a path that does not contain $v$ is selected, then $v$ stays non-defective. Otherwise, if a path that contains $v$ is selected, then, after Step 2, both the inbound flow (that is the flow into $v$) and the outbound flow (that is the flow out of $v$) will be increased by the same quantity, and the node will stay non-defective. An induction on the stages then proves the claim. As the number of paths is finite, and sets $\Vdo$ and $\Vdd$ cannot have any nodes added, the algorithm terminates after a finite number of stages.

The output of the algorithm (denoted, with a slight abuse of notation, as $\{\hat f_R(u,v)\}_{(u,v)}$) is a partial rebalancing flow that is not feasible (as, by assumption, there does not exist a set of
feasible rebalancing flows).  Therefore, by Lemma \ref{lemma:prsourcesink}, such partial rebalancing flow has at least one defective origin and at least one defective destination. 
Let us define $\E_{ns}:=\E\setminus\{(u,v):(u,v)\text{ is saturated}\}$ as the collection of non-saturated edges under the flows $\{f_m(u,v)\}_{(u,v),m}$ and $\{\hat f_R(u,v)\}_{(u,v)}$. For any defective destination  and any defective origin, all paths connecting them contain at least one saturated edge (due to the exit condition in Step 1).
Therefore, the graph $G_{ns}(\V, \E_{ns})$ has two properties: (i) it is disconnected (that is, it is not possible to find a direct path between every pair of nodes in $\V$ by using edges in $\E_{ns}$), and (ii) a defective origin and a defective destination can not be in the same strongly connected component (hence, graph $G_{ns}(\V, \E_{ns})$ can be partitioned into at least two strongly connected components).

We now find the cut $(\S,\bar \S)$ as follows. If a strongly connected component of $G_{ns}$ contains defective destinations, we assign its nodes to set $\S$. If a strongly connected component contains defective origins, we assign its nodes to set $\bar \S$. If a strongly connected component contains neither defective origins nor destinations, we assign its nodes to $\S$ (one could also assign its nodes to $\bar \S$, but such choice is immaterial for our purposes). By construction, $(\S, \bar \S)$ is a cut, and its edges are all saturated. Furthermore, set $\S$ only contains destination nodes, and set $\bar \S$ only contains origin nodes, which concludes the proof.
\end{proof}

\ifnamechange
\section*{Appendix: Capacity Symmetry within Urban Centers in the US}
\else
\section*{Supplementary Material: Capacity Symmetry within Urban Centers in the US}
\fi
\label{sec:unbalanceOSM}

The existential result in Section \ref{sec:analysis}, Theorem \ref{thm:symmetric}, relies on the assumption that the road network is capacity-symmetric, i.e., for every cut $(\S,\bar \S), C_\text{out}(\S,\bar \S)=C_\text{in}(\S,\bar \S)$.
One may wonder whether this assumption is (approximately) met in practice. From an intuitive standpoint, one might argue that transportation networks within urban centers are indeed {\em designed} to be capacity symmetric, so as to avoid accumulation of traffic flow in some directions. We corroborate this intuition by computing the imbalance between the outbound capacity (i.e., $C_\text{out}$) and the inbound capacity (i.e., $C_\text{in}$) for 1000 randomly-selected cuts within several urban centers in the United States. For each edge $(u,v)\in \E$, we approximate its capacity as proportional to the product of the speed limit $v_{\text{max}}(u,v)$ on that edge and the number of lanes $L(u,v)$, that is,
$c(u,v)\propto v_{\text{max}}(u,v) \cdot L(u,v)$.
The road graph $G(\V,\E)$, the speed limits, and the number of lanes are obtained from OpenStreetMap data \cite{MH-PW:08}.

For a cut $(S,\bar S)$, we define its fractional capacity disparity $D(\S,\bar{\S})$ as
\[
D(\S,\bar{\S}):=2\, \frac{\left|C_\text{out}(\S,\bar \S) -C_\text{in}(\S,\bar \S)\right|}{C_\text{out}(\S,\bar \S) +C_\text{in}(\S,\bar \S)}.
\]
Table \ref{tab:unbalanceOSM} shows the average (over 1000 samples) fractional capacity disparity for several US urban centers. As expected, the road networks for such cities appear to posses a very high degree of capacity-symmetry, which validates the symmetry assumption made in Section \ref{sec:analysis}. 

\begin{table}[htp]
\begin{center}\label{tab:unbalanceOSM}
\caption{Average fractional capacity disparity for several major urban centers in the United States.}
\begin{tabular}{rcc}
Urban center & Avg. frac. capacity disparity & Std. dev.\\
\hline\\ [-6pt]
Chicago, IL & 1.2972 $\cdot 10^{-4}$& $1.003\cdot 10^{-4}$\\
New York, NY & 1.6556 $\cdot 10^{-4}$&$1.304 \cdot 10^{-4}$\\
Colorado Springs, CO &3.1772 $\cdot 10^{-4}$&$ 2.308\cdot 10^{-4}$\\
Los Angeles, CA & 0.9233 $\cdot 10^{-4}$&$0.676\cdot 10^{-4}$\\
Mobile, AL & 1.9368 $\cdot 10^{-4}$& $1.452\cdot 10^{-4}$\\
Portland, OR & 1.0769 $\cdot 10^{-4}$& $0.778\cdot 10^{-4}$\\
\end{tabular}
\end{center}
\end{table}

\fi
\end{document}